\theoremstyle{plain}
\newtheorem{theorem}{Theorem}
\newtheorem{lemma}[theorem]{Lemma}
\newtheorem{remark}[theorem]{Remark}
\definecolor{OXO-emph}{RGB}{153,0,0}
\title{Hierarchical Polar Coding for Achieving Secrecy over Fading Wiretap Channels without any Instantaneous CSI}
\author{\IEEEauthorblockN{Hongbo~Si, O.~Ozan~Koyluoglu, and Sriram~Vishwanath
\thanks{
The material in this paper was presented in part at 2015 IEEE International Symposium on Information Theory
(ISIT 2015), Hong Kong, June 2015.}
\thanks{H.~Si and S.~Vishwanath are with the Laboratory for Informatics, Networks, and Communications,
Wireless Networking and Communications Group,
The University of Texas at Austin, Austin, TX 78712. O.~O.~Koyluoglu is with the
Department of Electrical and Computer Engineering,
The University of Arizona, Tucson, AZ 85721.
Email: sihongbo@mail.utexas.edu, ozan@email.arizona.edu, sriram@austin.utexas.edu.}
}}
\begin{document}
\maketitle


\begin{abstract}

This paper presents a polar coding scheme to achieve secrecy in block fading binary symmetric wiretap channels without the knowledge of instantaneous channel state information (CSI) at the transmitter. For this model, a coding scheme that hierarchically utilizes polar codes is presented. In particular, on polarization of different binary symmetric channels over different fading blocks, each channel use is modeled as an appropriate binary erasure channel over fading blocks. Polar codes are constructed for both coding over channel uses for each fading block and coding over fading blocks for certain channel uses. In order to guarantee security, random bits are introduced at appropriate places to exhaust the observations of the eavesdropper. It is shown that this coding scheme, without instantaneous CSI at the transmitter, is secrecy capacity achieving for the simultaneous fading scenario. For the independent fading case, the capacity is achieved when the fading realizations for the eavesdropper channel is always degraded with respect to the receiver. For the remaining cases, the gap is analyzed by comparing lower and upper bounds. Remarkably, for the scenarios where the secrecy capacity is achieved, the results imply that instantaneous CSI does not increase the secrecy capacity.
\end{abstract}

\begin{keywords}
Fading Channel, Information Theoretic Security, Polar Coding, Wiretap Channel. 
\end{keywords}

\section{Introduction}

\subsection{Background}
Wiretap channels, introduced in the seminal paper of Wyner \cite{Wyner:wire75}, model the communication between a transmitter and a receiver in the presence of an eavesdropper that overhears the transmitted signals via the channel between transmitter and eavesdropper (e.g., by tapping the wire between the legitimate nodes). The task of transmitter is to hide information from the eavesdropper while communicating reliably to the receiver. Wyner studied this problem and characterized the capacity region for certain channel models including the case of degraded eavesdropper \cite{Wyner:wire75}. The achievability technique is the randomized version of the Shannon's random coding approach, where the randomization is utilized to confuse the eavesdropper, in order to achieve security. Since the publication of Wyner's work, several studies in the network information theory domain have utilized this random coding approach to characterize the corresponding secrecy capacities. Yet, the design of secrecy achieving coding schemes with practical constraints such as low complexity and availability of channel state information remains as an important direction in the physical layer security.

Recently, polar codes have been utilized for communication over degraded wiretap channels \cite{Mahdavifar:Wiretap11, Koyluoglu:Secure12, Andersson:Nested10, Hof:Secrecy10}. Polar codes are the first family of provably capacity achieving codes for symmetric binary-input discrete memoryless channels with low encoding and decoding complexity \cite{Arikan:Channel08}. These codes rely on the ``channel polarization'' technique, which reconstructs a set of equivalent channels such that each of them is either purely noisy or noiseless. Noting that the fraction of noiseless channels approaches the symmetric channel capacity, transmitting information symbols on the good instances and freezing the bad ones achieves the optimal rate. The schemes proposed in \cite{Mahdavifar:Wiretap11, Koyluoglu:Secure12, Andersson:Nested10, Hof:Secrecy10} are based on the behavior of the polarization of degraded channels, where the polarized channels for the degraded wiretap channels can be partitioned to one of the following sets: (i) good for both receiver and eavesdropper, (ii) good for receiver and bad for eavesdropper, and (iii) bad for both receiver and eavesdropper. The fraction of type (ii) channels approach to the secrecy capacity for the degraded (binary symmetric) wiretap channels, and the communication scheme utilizes this type of polarized indices to transmit information; whereas, type (i) channels are assigned to random bits to limit the eavesdropper's ability to obtain information about the messages. (Type (iii) channels are frozen, i.e., set to a constant value and shared to receiver.) This scheme allows for achieving the secrecy capacity, while inheriting the low complexity nature of polar codes. In other words, this technique mimics the Wyner's random coding approach with practical encoding/decoding schemes. The main hurdle for most practical applications though is to have the eavesdropper channel state information (CSI) at the transmitter, e.g., in order to differentiate between type (i) and (ii) channels in this coding scheme. Remarkably, an incorrect knowledge about the eavesdropper CSI would leak information, hence will not result in a meaningful security guarantee. In this work, we focus on relaxing the assumption on the instantaneous CSI knowledge, and develop polar coding schemes for fading wiretap channels, where only the statistics of CSI (of both receiver and eavesdropper) is known at the transmitter.

\subsection{Contributions}

The contributions of the paper can be summarized as follows:

\begin{itemize}
\item We first focus on a simultaneous fading model, where both main and eavesdropper channels experience the same fading states. A good exemplar for this scenario, perhaps, is physically degraded channels, where the eavesdropper observes the output of the main channel through an additional noisy channel (e.g., through wiretapping). In these models, when only the main channel experiences fading, the resulting system reduces to our simultaneous fading model. Here, we divide our analysis with respect to different orderings between the channel qualities seen by receiver and eavesdropper. Focusing on each case separately, we propose achievable secrecy rates based on hierarchical polar coding. This technique exploits multiple polar codes, which are utilized over channel uses for each block as well as over fading blocks for certain polarized channel indices.
\item We next focus on the optimality of the proposed scheme by comparing the resulting rate with an upper bound on secrecy capacity of the simultaneous fading model. The upper bound is obtained by allowing the transmitter to know instantaneous CSI, and remarkably, this bound is shown to be matching to the secrecy rates attained by the proposed achievable scheme, therefore characterizing the secrecy capacity of the system. It is remarkable that instantaneous CSI does not improve secrecy capacity for this model, and having only the statistical CSI suffices. (We note that our proposed coding scheme based on polar codes is instrumental in obtaining this result, as, to the best of our knowledge, there are no random coding strategies achieving a similar performance.)
\item Thirdly, we focus on a general model, where both main and eavesdropper channels have independent fading but at every fading block main channel has a stronger channel realization than the eavesdropper. In this case, a modified version of the aforementioned coding scheme is shown to achieve secrecy capacity.
\item Finally, we focus on the case of independent fading where eavesdropper's channel realization can be stronger than the main channel at a given block, but main channel is stronger on the average. In this scenario, we proposed an appropriate coding scheme, again based on hierarchical polar coding, and compare its performance with an upper bound on the secrecy capacity, which is obtained with instantaneous CSI assumption. We analyze the gap between the achievable rate and the outer bound, and investigate the role of instantaneous CSI in increasing secrecy rates for this model.
\end{itemize}

Overall, the proposed hierarchical polar coding scheme is a key component to achieve these results. This technique utilizes the polarization phenomena to convert the randomness in fading realizations (for both main and eavesdropper channels) to erasure channels over which additional polarization layer is used. Based on such a decomposition and injecting random symbols in appropriate positions (to achieve security over fading blocks), we establish a coding scheme that can secure messages without the need of any instantaneous CSI. (Detailed description of this technique is given later in the sequel.)

\subsection{Related Work}

In addition to \cite{Mahdavifar:Wiretap11, Koyluoglu:Secure12, Andersson:Nested10, Hof:Secrecy10}, recent studies on the design of polar coding schemes to achieve secrecy include \cite{Abbe:Key12, Sasoglu:Wiretap13, Sutter:Key13, Chou:Key13, Wei:Polar14, Chou:Polar14, Gulcu:Security14, Zheng:Polar14}, where strong security is considered in \cite{Koyluoglu:Secure12, Sasoglu:Wiretap13, Wei:Polar14, Gulcu:Security14}, key agreement/generation is studied in \cite{Koyluoglu:Secure12, Sutter:Key13, Chou:Key13}, and other channel models (different than discrete memoryless wiretap channel) are considered in \cite{Chou:Polar14, Zheng:Polar14, Andersson:Nested10}. Our model is similar to the fading models considered in \cite{Koyluoglu:Secure12, Zheng:Polar14}, but differentiates from all these prior studies in that only a statistical (i.e., distribution) CSI for \emph{both} receiver and eavesdropper channels is assumed at the transmitter. Polar coding schemes for fading wiretap channels are first studied in \cite{Koyluoglu:Secure12}, where the transmitter has the knowledge of instantaneous CSI for the receiver's channel and statistical CSI for the eavesdropper's channel. With this setup, a key agreement scheme is proposed based on utilizing polar codes for each fading block, where the communicated bits over fading blocks are then used in a privacy amplification step to construct secret keys. This technique when combined with invertible extractors allows for secure message transmission but with the requirement of receiver CSI at the transmitter\cite{Koyluoglu:Secure12}. Recent work \cite{Zheng:Polar14} proposes a polar coding scheme that utilizes artificial noise and multiple transmit antennas under the same assumption (instantaneous CSI for receiver and statistical CSI for eavesdropper) for the fading channels. However, a guarantee of secrecy rate with some probability (not the corresponding channel capacity) is achieved. In contrast, in this paper, we consider a fading channel model where the transmitter does not need to know any instantaneous CSI, but only its distribution for \emph{both} receiver and eavesdropper channels. The hierarchical polar coding scheme proposed in this paper, to the best of our knowledge, is the first provably secrecy capacity achieving coding scheme for fading (binary symmetric) wiretap channels. Considering that this type of binary channels model the AWGN channels with BPSK modulation and demodulation, our scheme covers a wide application scenarios in practice.

\subsection{Organization}

The rest of the paper is organized as follows. Section~\ref{sec:polar_intro} presents a brief introduction to polar coding scheme. Section~\ref{sec:simultaneous_fading} focuses on the simultaneous fading model, where the achievable coding scheme based on hierarchical polar coding is developed and the secrecy capacity is characterized. Section~\ref{sec:independent_fading} investigates the independent fading model, where modified coding schemes are developed specific to this model and the gap to capacity is analyzed via the difference between upper and lower bounds. Finally, Section~\ref{sec:conclusion} concludes the paper.

\section{Introduction to Polar Coding}
\label{sec:polar_intro}

Before moving to the main body of this paper, we first introduce the preliminary for polar coding. The construction of polar code is based on a phenomenon referred to as \emph{channel polarization}. Consider a binary-input discrete memoryless channel $\mathcal{W}_{\textrm{B-DMC}}:\mathcal{X}\to\mathcal{Y}$, where $\mathcal{X}=\{0,1\}$. Define
\begin{equation}
\bm{F}=\left[\begin{array}{cc}
  1 & 0 \\
  1 & 1
\end{array}\right].\nonumber\end{equation}
Let $\bm{B}_N$ be the bit-reversal operator as defined in \cite{Arikan:Channel08}, where $N$ is a power of $2$. By applying the transform $\bm{G}_N=\bm{B}_N\bm{F}^{\otimes \log N}$ ($\bm{F}^{\otimes \log N}$ denotes the $\log N$th Kronecker product of $\bm{F}$) to $u_{1:N}$, the encoded codeword given by $x_{1:N}=u_{1:N} \bm{G}_N$  is transmitted through $N$ independent copies of $\mathcal{W}_{\textrm{B-DMC}}$. Now, consider $N$ binary-input coordinate channels $\mathcal{W}_N^{(i)}:\mathcal{X}\to\mathcal{Y}^N\times\mathcal{X}^{i-1}$ ($i\in\{1,\ldots,N\}$), the transition probability is given by
\begin{equation}
\mathcal{W}_N^{(i)}(y_{1:N},u_{1:{i-1}}|u_i)\triangleq \sum_{u_{{i+1}:N}}\frac{1}{2^{N-1}}\mathcal{W}_{\textrm{B-DMC}}^N(y_{1:N}|u_{1:N}\bm{G}_N).\nonumber
\end{equation}
Remarkably, as $N\to\infty$, the channels $\mathcal{W}_N^{(i)}$ polarize to either noiseless or pure-noisy, and the fraction of noiseless channels is close to $I(\mathcal{W}_{\textrm{B-DMC}})$, the symmetric capacity of channel $\mathcal{W}_{\textrm{B-DMC}}$ \cite{Arikan:Channel08}.

\begin{figure}[t]
  \includegraphics[scale=0.8]{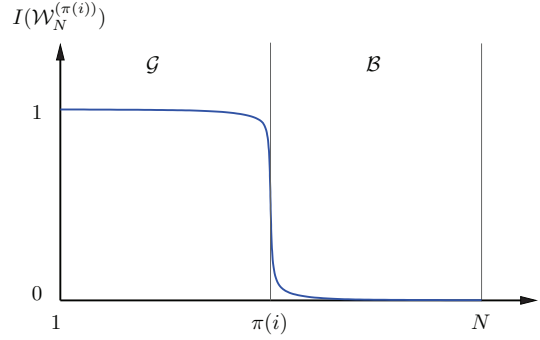}
  \centering
  \caption{Illustration of channel polarization for polar codes}\label{fig:channel_polarization}
\end{figure}

Given this polarization phenomenon (as shown in Fig.~\ref{fig:channel_polarization}), polar codes can be considered as $\bm{G}_N$-coset codes with parameters $(N,K,\mathcal{A},u_{\mathcal{A}^c})$, where $u_{\mathcal{A}^c}\in\mathcal{X}^{N-K}$ is frozen vector (can be set to all-zeros for binary symmetric channels \cite{Arikan:Channel08}), and the information set $\mathcal{A}$ is chosen as a $K$-element subset of $\{1,\ldots,N\}$ such that the Bhattacharyya parameters satisfy $Z(\mathcal{W}_N^{(i)})\leq Z(\mathcal{W}_N^{(j)})$ for all $i\in\mathcal{A}$ and $j\in\mathcal{A}^c$, i.e., $\mathcal{A}$ denotes the set of indices for \emph{good} channels (that are noiseless in the limit). We use permutations (namely, $\pi$ and $\phi$ in the sequel) to denote the increasing order of Bhattacharyya parameter values for the polarization of underlying channels. (For instance, for block length $N$, $\pi(1)$ gives the most reliable polarized channel index, and $\pi(N)$ gives the most unreliable one.)

A decoder for a polar code is the successive cancelation (SC) decoder, which gives an estimate $\hat{u}_{1:N}$ of $u_{1:N}$ given knowledge of $\mathcal{A}$, $u_{\mathcal{A}^c}$, and $y_{1:N}$ by computing
\begin{align}
\hat{u}_i \triangleq \left\{
\begin{array}{cl}
  1, & \text{if }i\in \mathcal{A}, \text{ and } \frac{\mathcal{W}_N^{(i)}(y_{1:N},\hat{u}_{1:{i-1}}|0)}{\mathcal{W}_N^{(i)}(y_{1:N},\hat{u}_{1:{i-1}}|1)}\geq 1,\\
    0, & \text{otherwise },
\end{array}
\right.\nonumber
\end{align}
in the order $i$ from $1$ to $N$.
It has been shown that, by adopting an SC decoder, polar codes achieve any rate $R<I(\mathcal{W}_{\textrm{B-DMC}})$ with a decoding error scaling as $O(2^{-N^{\beta}})$, where $\beta<1/2$. Moreover, the complexity for both encoding and decoding is $O(N\log N)$.

\section{Hierarchical Polar Coding for Simultaneous Fading Case}
\label{sec:simultaneous_fading}

\subsection{Problem setup}

In this section, we investigate the case where main channel and eavesdropper fade simultaneously. More precisely, consider the fading (binary symmetric) wiretap channel model (Fig.~\ref{fig:system}): Alice wishes to send message to Bob through the main channel $\mathcal{W}$, where the channel experiences the following block fading phenomenon: with probability $q_1$, channel $\mathcal{W}$ is a BSC$(p_1)$ (in the superior state), and with the rest probability $q_2\triangleq1-q_1$, channel $\mathcal{W}$ is a BSC$(p_2)$ (in the degraded state). On the same time, the transmission also reaches to an adversary (Eve) through the wiretap channel $\mathcal{W}^*$, where $\mathcal{W}^*$ is degraded compared to the main channel, and experiences the same fading state as the main channel. In particular, when $\mathcal{W}$ is a BSC$(p_1)$, $\mathcal{W}^*$ is a BSC$(p_1^*)$; when $\mathcal{W}$ is a BSC$(p_2)$, $\mathcal{W}^*$ is a BSC$(p_2^*)$. Under this system model, we have $p_1\leq p_2\leq 0.5$, $p_1^*\leq p_2^*\leq 0.5$, $p_1\leq p_1^*$, and $p_2\leq p_2^*$.
\begin{remark}
Simultaneous fading model considers the case where main channel and eavesdropper channel experience the same fading states (i.e., both are superior or both are degraded), and eavesdropper channel is assumed to be degraded to the main channel over each fading block. In Section~\ref{sec:independent_fading}, we consider an independent fading model where the two channels take their degraded/superior fading states independent of each other. (In this scenario, the eavesdropper can be stronger than the main channel for a given fading block, but the main channel is assumed to be stronger on the average.)
\end{remark}

\begin{figure}[t]
  \includegraphics[scale=0.8]{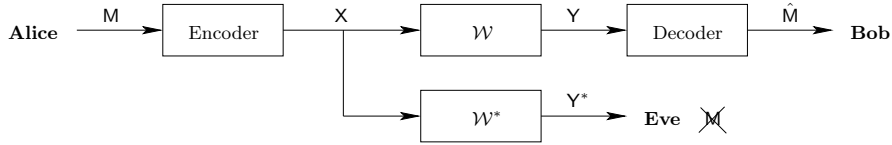}
  \centering
  \caption{System model for wiretap channels.}\label{fig:system}
\end{figure}

In general, fading coefficients vary at a much slower pace than the transmission symbol duration. For such cases, block fading model is considered, where the channel state is assumed to be constant within each coherence time interval, and follows a stationary ergodic process across fading blocks \cite{Tse:Wireless05}. To this end, we consider the practical scenario where channel state information (CSI) is available only at the decoder (CSI-D), while the encoder only knows the statistics of channel states. Under this model, a secrete message $\mathsf{M}$ is encoded by an encoding function $f(\cdot)$ to generate transmitted symbols: $\mathsf{X}_{1:NB}=f(\mathsf{M})$, where $N$ is the length of a fading block, and $B$ is the number of blocks. At the receiver, a decoding function $g(\cdot)$ gives an estimate of the estimate $\hat{\mathsf{M}}$, i.e., $\hat{\mathsf{M}}=g(\mathsf{Y}_{1:NB})$. The reliability of transmission is satisfied if
\begin{align}
P_{\textrm{e}}\triangleq\text{Pr}\{\mathsf{M}\neq\hat{\mathsf{M}}|\mathsf{Y}_{1:NB},\mathsf{S}\}\to 0, \textrm{ as } N,B \to\infty \label{equ:defn_reliability}
\end{align}
where $\mathsf{S}$ denotes CSI, and (weak) security is defined as achieving
\begin{align}
\frac{1}{NB} I(\mathsf{M};\mathsf{Z}_{1:NB}|\mathsf{S})\to 0,  \textrm{ as } N,B \to\infty.\label{equ:defn_security}
\end{align}
We denote the secrecy capacity of this model as $C^s_{\textrm{CSI-D}}$, which represents the highest achievable secrecy rate satisfying reliability \eqref{equ:defn_reliability} and secrecy \eqref{equ:defn_security} constraints.

\subsection{Upper bound to the secrecy capacity}

Under the degraded assumption, the secrecy capacity of the wiretap system can be upper bounded as reported in the following result.

\begin{lemma}\label{thm:UpperBound1}
The secrecy capacity for the simultaneous fading model is upper bounded by
$$C^s_{\textrm{CSI-D}}  \leq q_1[H(p_1^*)-H(p_1)]+q_2[H(p_2^*)-H(p_2)].$$
\end{lemma}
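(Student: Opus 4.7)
The plan is to prove the upper bound via a standard single-letter converse for the degraded wiretap channel, exploiting the fact that for the simultaneous fading model, conditional on the state $S_i$, the eavesdropper's channel is physically degraded from the main channel (so $X_i \to Y_i \to Z_i$ given $S_i = s$). Since both the receiver and the eavesdropper observe the CSI sequence $S$, the state plays the role of common side information at both decoders, and the standard wiretap tools apply in their conditional form.

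First, I would fix any reliable and weakly secure $(N,B,R)$ code and apply Fano's inequality together with the weak secrecy constraint to obtain
\begin{align*}
NBR \leq I(M; Y_{1:NB} \mid S) - I(M; Z_{1:NB} \mid S) + NB\, \epsilon_{N,B},
\end{align*}
with $\epsilon_{N,B} \to 0$. Next, I would single-letterize the right-hand side using the Csisz\'ar sum identity with auxiliary $V_i \triangleq (Y_1^{i-1}, Z_{i+1}^{NB}, S)$ and the per-symbol degradedness $X_i \to Y_i \to Z_i$ (given $S_i$), together with the block-memoryless channel and i.i.d.\ states, to bound it by
\begin{align*}
\sum_{i=1}^{NB} \bigl[ I(X_i; Y_i \mid S_i) - I(X_i; Z_i \mid S_i) \bigr].
\end{align*}
For a fixed state $s \in \{1,2\}$ and any input distribution $P_{X_i}$, monotonicity of BSC output entropy in the crossover probability (a direct scalar computation, equivalently Mrs.\ Gerber's Lemma) gives $H(Z_i \mid S_i=s) \geq H(Y_i \mid S_i=s)$, and together with $H(Y_i \mid X_i, S_i=s) = H(p_s)$ and $H(Z_i \mid X_i, S_i=s) = H(p_s^*)$, this yields the per-letter estimate $I(X_i; Y_i \mid S_i=s) - I(X_i; Z_i \mid S_i=s) \leq H(p_s^*) - H(p_s)$, with equality under a uniform $X_i$. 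Averaging over $S_i \sim \mathrm{Bern}(q_1)$, summing over $i$, normalizing by $NB$, and letting $N, B \to \infty$ delivers the claimed bound.

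The main obstacle is executing the single-letterization cleanly while tracking the conditioning on $S$ throughout; the degraded-wiretap converse of Wyner/Csisz\'ar--K\"orner applies in its conditional version because degradedness holds given $S_i$ and $S$ is observed everywhere. Equivalently, one may derive the same bound by a genie argument that hands the encoder instantaneous CSI: the system then decomposes into two parallel degraded BSC wiretap sub-channels of weights $q_1$ and $q_2$, whose individual secrecy capacities $H(p_s^*) - H(p_s)$ add up to precisely the right-hand side of the lemma. This genie is only an enhancement of the assumed CSI-D setting, so the bound applies to $C^s_{\textrm{CSI-D}}$ as well, and the two routes coincide --- which is the converse counterpart of the paper's observation that encoder-side instantaneous CSI does not increase secrecy capacity for this model.
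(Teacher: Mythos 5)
Your proposal is correct, but your primary route is more elementary and self-contained than the paper's. The paper's proof is a two-line citation argument: it starts from the single-letter expression $\max_{p(x)}\left[I(\mathsf{X};\mathsf{Y}|\mathsf{S})-I(\mathsf{X};\mathsf{Z}|\mathsf{S})\right]$, enhances the encoder with instantaneous CSI so that the maximization becomes $\max_{p(x|s)}$ (step (a), citing the fading-wiretap literature), and then decomposes the problem into two parallel degraded BSC wiretap channels whose secrecy capacities $H(p_s^*)-H(p_s)$ are quoted from El Gamal--Kim (step (b)). That is exactly the ``genie'' argument you sketch in your closing paragraph. Your main route --- Fano plus weak secrecy, Csisz\'ar-sum single-letterization conditioned on $\mathsf{S}$, and the per-letter bound $I(\mathsf{X};\mathsf{Y}|\mathsf{S}=s)-I(\mathsf{X};\mathsf{Z}|\mathsf{S}=s)\leq H(p_s^*)-H(p_s)$ via $H(\mathsf{Z}|\mathsf{S}=s)\geq H(\mathsf{Y}|\mathsf{S}=s)$ for the degraded BSC --- actually does more work than the paper: it derives from the operational definitions the single-letter expression that the paper takes as its starting point, and it makes explicit why degradedness conditional on the state lets the standard converse go through. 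What your route buys is rigor and independence from external capacity results; what the paper's route buys is brevity, and it makes transparent the observation (which you also note) that the bound is really $C^s_{\textrm{CSI-ED}}$, so that the matching achievability without encoder CSI shows instantaneous CSI does not help here. One small caution on your version: within a fading block the $N$ symbols share a common state, so the per-symbol states $\mathsf{S}_i$ are not independent across $i$; this is harmless because you condition on the entire sequence $\mathsf{S}$ throughout and only use the marginal $\Pr\{\mathsf{S}_i=1\}=q_1$ in the final averaging, but it is worth stating explicitly.
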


\begin{proof}
We have
\begin{align}
C^s_{\textrm{CSI-D}} & \triangleq \max_{p(x)} \left[I(\mathsf{X};\mathsf{Y}|\mathsf{S})-I(\mathsf{X};\mathsf{Z}|\mathsf{S})\right]\nonumber\\
                    & \overset{(a)}{\leq}\max_{p(x|s)} \left[I(\mathsf{X};\mathsf{Y}|\mathsf{S})-I(\mathsf{X};\mathsf{Z}|\mathsf{S})\right]\nonumber\\
                    & = \max_{p(x|1)}q_1[I(\mathsf{X};\mathsf{Y}|\mathsf{S}=1)-I(\mathsf{X};\mathsf{Z}|\mathsf{S}=1)]\nonumber\\
                    &\quad+ \max_{p(x|2)}q_2[I(\mathsf{X};\mathsf{Y}|\mathsf{S}=2)-I(\mathsf{X};\mathsf{Z}|\mathsf{S}=2)]\nonumber\\
                    & \overset{(b)}{=}q_1[H(p_1^*)-H(p_1)]+q_2[H(p_2^*)-H(p_2)],\label{equ:capacity_system}
\end{align}
where
\begin{itemize}
\item[$(a)$] follows by upper bounding the secrecy capacity with the case where encoder has CSI and adapts its coding scheme according to the channel states, i.e., $C^s_{\textrm{CSI-ED}}$ \cite{Gopala:Secrecy08};
\item[$(b)$] is due to the secrecy capacity result for the degraded binary symmetric wiretap channel \cite{Gamal:Network11}.
    \end{itemize}

\end{proof}

In this paper, assuming CSI is available only at the receivers, we provide a polar coding scheme that achieves this upper bound while satisfying reliability \eqref{equ:defn_reliability} and security \eqref{equ:defn_security} constraints. To this end, the upper bound \eqref{equ:capacity_system} gives the secrecy capacity of our model. For the moment, we assume $p_1\leq p_2\leq p_1^*\leq p_2^*$, and the remaining case ($p_1\leq p_1^*\leq p_2\leq p_2^*$) is detailed in Section~\ref{sec:simultaneous:case2}.

\begin{figure}[t]
  \includegraphics[scale=0.8]{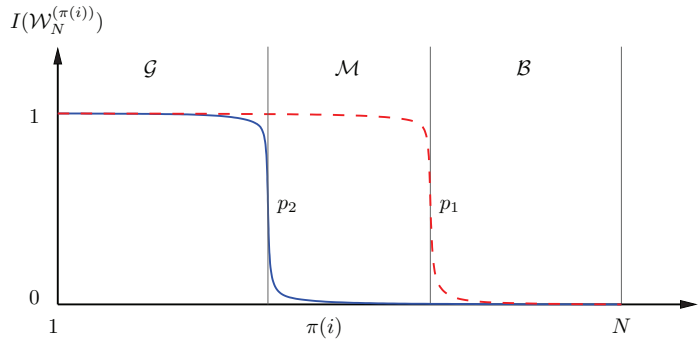}
  \centering
  \caption{Illustration of polarizations for two binary symmetric channels.}\label{fig:fading_polar}
\end{figure}

\subsection{Intuition behind the proposed coding scheme}

The intuition of hierarchical polar coding scheme originates from the polarization of degraded channels \cite{Korada:Thesis09}. When polarizing two binary symmetric channels BSC$(p_1)$ and BSC$(p_2)$ with $p_1\leq p_2$, the \emph{good} channels (i.e., noiseless as block length tends to infinity) of the polarized degraded channel BSC$(p_2)$ is a subset of that of the superior channel BSC$(p_1)$. As illustrated in Fig.~\ref{fig:fading_polar}, set $\mathcal{G}$ contains all good channel indices after permutation for both channels, while set $\mathcal{B}$ contains all bad channel indices after permutation for both channels. \cite{Si:Polar13} utilizes this property to construct hierarchical polar codes in order to achieve the capacity of fading binary symmetric channels. More precisely, polar codes are not only designed over channel uses within each fading block, but also utilized over different fading blocks. Inspired by this design, and combined with the polar coding scheme for wiretap channels \cite{Mahdavifar:Wiretap11, Koyluoglu:Secure12, Andersson:Nested10, Hof:Secrecy10}, we design the proposed polar coding scheme for fading binary symmetric wiretap channels.

\subsection{The scenario of $p_1\leq p_2\leq p_1^*\leq p_2^*$}
\label{sec:simultaneous:case1}

\begin{theorem}\label{thm:LowerBound1}
The secrecy capacity for the simultaneous fading model for $p_1\leq p_2\leq p_1^*\leq p_2^*$ is given by
$$C^s_{\textrm{CSI-D}}  = q_1[H(p_1^*)-H(p_1)]+q_2[H(p_2^*)-H(p_2)].$$
\end{theorem}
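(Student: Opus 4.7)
The plan is to prove achievability that matches the upper bound from Lemma~\ref{thm:UpperBound1}, yielding secrecy capacity. Since Bob has CSI, I can focus on a coding scheme that does not use instantaneous CSI at Alice but exploits the ordered degradation $p_1\leq p_2\leq p_1^*\leq p_2^*$ through a hierarchical (two-layer) polar code. The key enabler is the nesting property of polarized good-index sets under stochastic degradation: for BSCs with crossover parameters $p\leq p'$, the reliable indices of the polar transform of $\mathrm{BSC}(p')$ are contained in those of $\mathrm{BSC}(p)$. This will produce a clean partition of the $N$ polarized coordinates (for large block length $N$) and let me reduce the problem per coordinate to a coding problem across $B$ fading blocks.

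First I would polarize each of the four channels $\mathrm{BSC}(p_1),\mathrm{BSC}(p_2),\mathrm{BSC}(p_1^*),\mathrm{BSC}(p_2^*)$ with the same transform $\bm{G}_N$, and partition $\{1,\dots,N\}$ into five nested classes:
\begin{align*}
\mathcal{I}_1&: \text{good for all four};\\
\mathcal{I}_2&: \text{good for } \mathrm{BSC}(p_1),\mathrm{BSC}(p_2),\mathrm{BSC}(p_1^*), \text{ bad for } \mathrm{BSC}(p_2^*);\\
\mathcal{I}_3&: \text{good for } \mathrm{BSC}(p_1),\mathrm{BSC}(p_2), \text{ bad for } \mathrm{BSC}(p_1^*),\mathrm{BSC}(p_2^*);\\
\mathcal{I}_4&: \text{good for } \mathrm{BSC}(p_1) \text{ only};\\
\mathcal{I}_5&: \text{bad for all}.
\end{align*}
Their asymptotic fractions are $1-H(p_2^*),H(p_2^*)-H(p_1^*),H(p_1^*)-H(p_2),H(p_2)-H(p_1),H(p_1)$ respectively. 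Because the fading is simultaneous, for any fixed coordinate $i$ I can read off the per-block behavior across the $B$ fading blocks: for $i\in\mathcal{I}_2$, Bob sees a noiseless channel while Eve sees a binary erasure channel with erasure probability $q_2$; for $i\in\mathcal{I}_3$, Bob is noiseless and Eve is pure-noise; for $i\in\mathcal{I}_4$, Bob sees a $\mathrm{BEC}(q_2)$ and Eve is pure-noise; classes $\mathcal{I}_1$ and $\mathcal{I}_5$ carry no secret information.

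Next I would apply a second polar transform $\bm{G}_B$ across the $B$ blocks, separately for each coordinate class, and use the wiretap-polar assignment rule \cite{Mahdavifar:Wiretap11,Andersson:Nested10,Hof:Secrecy10}. For $i\in\mathcal{I}_2$ the outer polarization (against Eve's $\mathrm{BEC}(q_2)$) identifies a fraction $q_2$ of block-indices that are unreliable for Eve; I place secret bits there and random bits in the remaining $q_1$ fraction that are reliable for Eve, so Eve's observation on those coordinates is exhausted. For $i\in\mathcal{I}_3$ Eve learns nothing even without a second layer, so the entire coordinate carries secret bits. For $i\in\mathcal{I}_4$ the outer polarization is against Bob's $\mathrm{BEC}(q_2)$; a fraction $q_1$ of block-indices are reliable for Bob and, since Eve is already pure-noise, can all carry secret bits. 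Summing,
\begin{equation*}
R_s = q_2[H(p_2^*)-H(p_1^*)] + [H(p_1^*)-H(p_2)] + q_1[H(p_2)-H(p_1)] = q_1[H(p_1^*)-H(p_1)] + q_2[H(p_2^*)-H(p_2)],
\end{equation*}
matching the upper bound. Reliability follows from combining the per-block SC decoder error $O(2^{-N^{\beta}})$ with the outer SC decoder error $O(2^{-B^{\beta}})$ via a union bound, using Bob's CSI to correctly identify erasures in the outer layer.

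The main obstacle will be the secrecy analysis $\tfrac{1}{NB}I(\mathsf{M};\mathsf{Z}_{1:NB}|\mathsf{S})\to 0$. Unlike the non-fading case, leakage is now distributed across two polar layers and three coordinate classes, and the information Eve accumulates on an $\mathcal{I}_2$ coordinate over different blocks is coupled through the outer codeword. I would handle this by decomposing $I(\mathsf{M};\mathsf{Z}_{1:NB}|\mathsf{S})$ class by class and, for $\mathcal{I}_2$, conditioning on the outer random-bit vector; the nested wiretap-polar argument applied to Eve's effective $\mathrm{BEC}(q_2)$ then drives the corresponding Bhattacharyya sum to zero, while for $\mathcal{I}_3$ the standard per-block wiretap argument (with all-random padding on $\mathcal{I}_1$) suffices, and $\mathcal{I}_4$ is trivially secure. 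Showing that the error/leakage exponents in $N$ and $B$ can be chosen simultaneously (so the vanishing statements hold in the joint limit $N,B\to\infty$) is the key technical step.
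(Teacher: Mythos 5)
Your construction is exactly the paper's: the same five-way partition of the inner polarized indices (the paper's $\mathcal{R},\mathcal{M}_1,\mathcal{I},\mathcal{M}_2,\mathcal{F}$ with the same asymptotic fractions), the same outer BEC$(q_2)$ polar codes across fading blocks with secret bits on Eve's bad outer indices and random bits on her good ones, the same rate bookkeeping, and the same three-phase erasure-aided decoder with the union-bounded $B\cdot 2^{-N^\beta}+|\mathcal{M}_2|\cdot 2^{-B^\beta}$ error. The only divergence is in how secrecy is closed: where you plan a class-by-class leakage decomposition with Bhattacharyya sums and worry about cross-block coupling on $\mathcal{I}_2$, the paper argues globally --- a genie hands Eve all message bits, she decodes every random bit via the same hierarchical decoder, and since the total injected randomness is exactly $NB\cdot C_{\textrm{CSI-D}}(\mathcal{W}^*)$, the chain $I(\mathsf{M};\mathsf{Z}|\mathsf{S})\leq NB\cdot C_{\textrm{CSI-D}}(\mathcal{W}^*)-H(\mathsf{R})+H(\mathsf{R}|\mathsf{Z},\mathsf{M},\mathsf{S})$ plus Fano's inequality kills the leakage in one step, neatly sidestepping the coupling issue you flag.
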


\begin{proof}
The upper bound follows from Lemma~\ref{thm:UpperBound1}. We detail the coding scheme (ie., a lower bound) as follows:

\textbf{Encoder:}

The encoder works in two phases (see Fig.~\ref{fig:encoder}), hierarchically using polar codes to generate an $NB$-length codeword.

\textbf{Phase I (BEC Encoding):}

Here, we consider two sets of messages to be encoded using polar encoders designed for binary erasure channels (BECs). For the first set of messages, we generate $|\mathcal{M}_1|$ number of BEC polar codes, where
\begin{align}
|\mathcal{M}_1|=N[H(p_2^*)-H(p_1^*)].\label{equ:length_M1}
\end{align}
Consider a set of blockwise messages $u^{(i)}_{1:|\mathcal{A}^c|}$ with $i\in\{1,\ldots,|\mathcal{M}_1|\}$, where $\mathcal{A}$ is the information set for BEC$(q_2)$, i.e.,
\begin{align}
&|\mathcal{A}|=B\cdot q_1,\label{equ:length_A}\\
&|\mathcal{A}^c|=B\cdot q_2.\label{equ:langth_Ac}
\end{align}
For every $u^{(i)}_{1:|\mathcal{A}^c|}$, we combine it with $|\mathcal{A}|$ random bits to construct polar codeword $\tilde{u}^{(i)}_{1:B}$.
Denoting the permutation for BEC$(q_2)$ channel as $\phi$, and the uniform random string as $r^{(i)}_{1:|\mathcal{A}|}$ (each bit is Ber$(1/2)$ distributed), the encoding process is given by
\begin{align}
\tilde{u}^{(i)}_{1:B}&=\bm{\mu}^{(i)}_{1:B}\times \bm{G}_B,\nonumber\\
\phi\left(\bm{\mu}^{(i)}_{1:B}\right)&=\left[\begin{array}{ccc}
r^{(i)}_{1:|\mathcal{A}|}& | & u^{(i)}_{1:|\mathcal{A}^c|}
\end{array}\right],\nonumber
\end{align}
for every $i\in\{1,\ldots,|\mathcal{M}_1|\}$, where $\bm{G}_B$ is the polar generator matrix with size $B$. By collecting all $\tilde{u}^{(i)}_{1:B}$ together, the encoder generates a $|\mathcal{M}_1|\times B$ matrix $\tilde{\bm{U}}$. We denote $\tilde{\bm{U}}^T_{k}$ as the $k$-th row of the transpose of $\tilde{\bm{U}}$, where $k\in\{1,\ldots,B\}$.

Secondly, we generate $|\mathcal{M}_2|$ number of BEC polar codes, where
\begin{align}
|\mathcal{M}_2|=N[H(p_2)-H(p_1)].\label{equ:length_M2}
\end{align}
Consider another set of blockwise messages $v^{(j)}_{1:|\mathcal{A}|}$ with $j\in\{1,\ldots,|\mathcal{M}_2|\}$. Each message is set as information bits to construct polar codeword $\tilde{v}^{(j)}_{1:B}$.
More formally, this encoding process is given by
\begin{align}
\tilde{v}^{(j)}_{1:B}&=\bm{\nu}^{(j)}_{1:B}\times \bm{G}_B,\nonumber \\ 
\phi\left(\bm{\nu}^{(j)}_{1:B}\right)&=\left[\begin{array}{ccc}
v^{(j)}_{1:|\mathcal{A}|}  & | & 0
\end{array}\right],\nonumber
\end{align}
for every $j\in\{1,\ldots,|\mathcal{M}_2|\}$.
The collection of all $\tilde{v}^{(j)}_{1:B}$ together is denoted as a $|\mathcal{M}_2|\times B$ matrix $\tilde{\bm{V}}$. We denote $\tilde{\bm{V}}^T_{k}$ as the $k$-th row of the transpose of $\tilde{\bm{V}}$, where $k\in\{1,\ldots,B\}$.

\begin{figure}[t!]
  \includegraphics[scale=0.78]{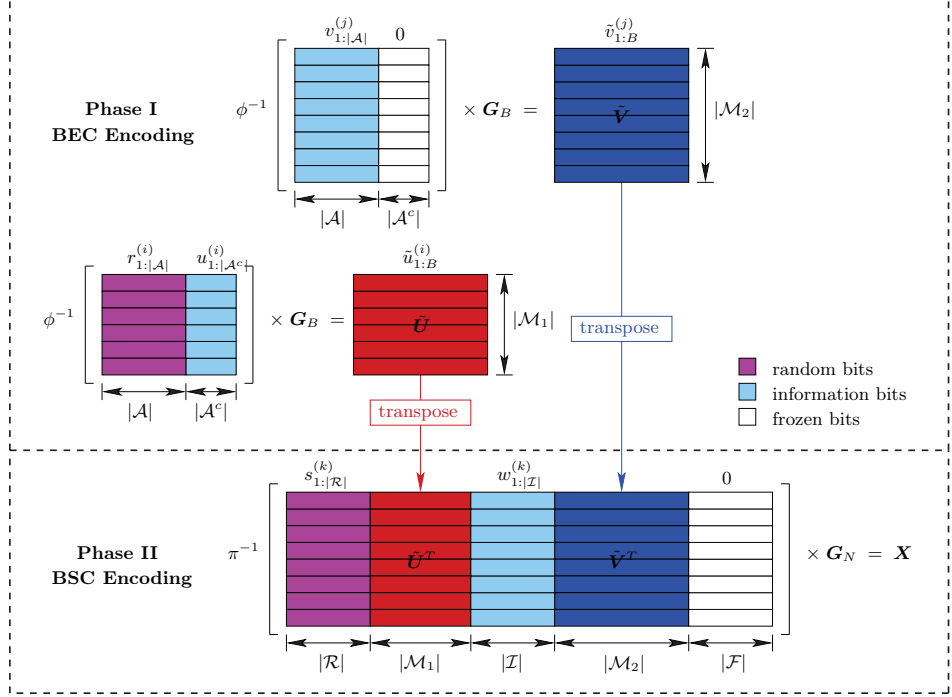}
  \centering
  \caption{Encoder of the polar coding scheme for fading wiretap channels.}\label{fig:encoder}
\end{figure}

\textbf{Phase II (BSC Encoding):}

In this phase, we generate $B$ number of BSC polar codes, each with length $N$. The encoded codewords from previous phase are embedded as messages of this phase. We consider a set of messages $w^{(k)}_{1:|\mathcal{I}|}$ with $k\in\{1,\ldots,B\}$, where
\begin{align}
|\mathcal{I}|=N[H(p_1^*)-H(p_2)].\label{equ:length_I}
\end{align}
For every $w^{(k)}_{1:|\mathcal{I}|}$, we introduce random bits $s_{1:|\mathcal{R}|}^{(k)}$, where
\begin{align}
|\mathcal{R}|=N[1-H(p_2^*)],\label{equ:length_R}
\end{align}
and combine the output from the previous phase as message to construct polar codeword $x^{(k)}_{1:N}$.
More formally, if we denote the reordering permutation for BSC$(p_1)$ as $\pi$, then the encoder of this phase can be expressed as
\begin{align}
x^{(k)}_{1:N}&=\bm{\omega}_{1:N}^{(k)}\times \bm{G}_N,  \nonumber\\
\pi\left(\bm{\omega}_{1:N}^{(k)}\right)&=\left[\begin{array}{ccccccccc}
s^{(k)}_{1:|\mathcal{R}|} & | & \tilde{\bm{U}}^T_{k} & | & w^{(k)}_{1:|\mathcal{I}|} & | & \tilde{\bm{V}}^T_{k} & | & 0
\end{array}\right],\nonumber
\end{align}
for every $k\in \{1,\ldots, B\}$, where $\bm{G}_N$ is the polar generator matrix with size $N$. That is, the codewords generated from BEC encoding phase are transposed and embedded into the messages of the BSC encoding process. We denote these codewords by a $B\times N$ matrix $\bm{X}$. The proposed encoder is illustrated in Fig.~\ref{fig:encoder}.

\textbf{Decoder for the Main Channel:}

The codewords $x^{(k)}_{1:N}$ are transmitted through both the main channel and the wiretap channel. After receiving the output sequence $y^{(k)}_{1:N}$ for all $k\in\{1,\ldots,B\}$, the task of the decoder at Bob is to make estimates for all the information and random bits.
In particular, the decoder aims to recover $u^{(i)}_{1:|\mathcal{A}^c|}$, $v^{(j)}_{1:|\mathcal{A}|}$, $w^{(k)}_{1:|\mathcal{I}|}$, $r^{(i)}_{1:|\mathcal{A}|}$, and $s^{(k)}_{1:|\mathcal{R}|}$ successfully with high probability.
As that of the encoding process, the decoding process also works in phases (see Fig.~\ref{fig:decoder_main}).

\begin{figure*}[t]
  \includegraphics[scale=0.78]{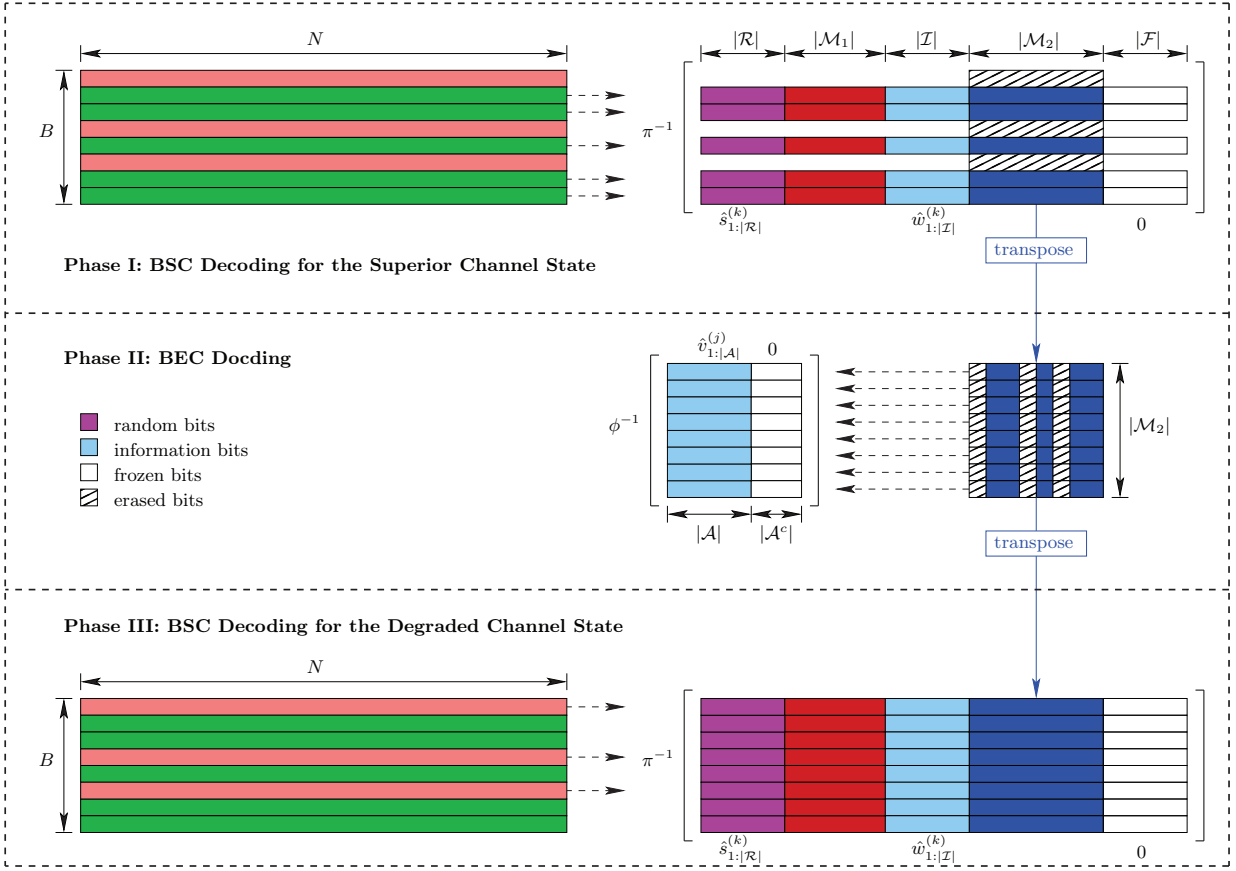}
  \centering
  \caption{Decoder at the main channel receiver given the knowledge of the channel states information.}\label{fig:decoder_main}
\end{figure*}

\textbf{Phase I (BSC Decoding for the Superior Channel State):}

In this phase, using the BSC SC decoder, channels corresponding to the superior state are decoded. More precisely, since the receiver knows the channel states, it can adopt the correct SC decoder to obtain estimates $\hat{\bm{\omega}}^{(k)}_{1:N}$ from $y^{(k)}_{1:N}$ for every $k$ corresponding to the superior channel state. To this end, the decoder adopted in this phase is the classical BSC SC polar decoder with parameter $p_1$, i.e.,
\begin{align}
\hat{\bm{\omega}}^{(k)}_n=\left\{\begin{array}{ll}
1,  & \text{if } n\notin\mathcal{F}\text{, and }\frac{\mathcal{W}^{(n)}_{1,N}(y^{(k)}_{1:N},\hat{\bm{\omega}}^{(k)}_{1:n-1}|1)}{\mathcal{W}^{(n)}_{1,N}(y^{(k)}_{1:N},\hat{\bm{\omega}}^{(k)}_{1:n-1}|0)}\geq 1,\\
0,  & \text{otherwise},
\end{array}
\right.\nonumber
\end{align}
in the order $n$ from $1$ to $N$, and $\mathcal{W}^{(n)}_{1,N}$ is the $n$-th polarized channel from BSC$(p_1)$. Then, for every $k$ corresponding to the superior channel state, the decoder can obtain the messages (with the knowledge of the frozen symbols corresponding to $\mathcal{F}$ indices)
\begin{align}
\pi\left(\hat{\bm{\omega}}^{(k)}_{1:N}\right)=\left[ \begin{array}{ccccccccc}
\hat{s}^{(k)}_{1:|\mathcal{R}|} & | & \hat{\tilde{\bm{U}}}^T_{k} & | & \hat{w}^{(k)}_{1:|\mathcal{I}|} & | & \hat{\tilde{\bm{V}}}^T_{k} & | & 0
\end{array} \right]. \nonumber
\end{align}
However, for the blocks with degraded channel states, one cannot decode reliably because the frozen bits corresponding to set $\mathcal{M}_2$ are unknown at the decoder. At this point, we use the next phase to decode these frozen bits using a BEC SC decoder. To proceed, we construct a $B\times |\mathcal{M}_2|$ matrix $\hat{\tilde{\bm{V}}}^T$ such that its rows corresponding to the superior state are determined in previous decoding process, while the ones corresponding to the degraded states are all set to erasures.

\textbf{Phase II (BEC Decoding):}

In this phase, we decode the frozen bits with respect to the degraded channel state. More precisely, each row of matrix $\hat{\tilde{\bm{V}}}$, denoted by $\hat{\tilde{\bm{V}}}_j$ for $j\in\{1,\ldots,|\mathcal{M}_2|\}$, is considered as the input to the decoder, and the receiver aims to obtain an estimate of the information bits from it using BEC SC decoder. To this end, the decoder adopted in this phase is the classical BEC SC decoder with parameter $q_2$, i.e.,
\begin{align}
\hat{\bm{\nu}}^{(j)}_b=\left\{\begin{array}{ll}
1,  & \text{if } b\in\mathcal{A}\text{, and }\frac{\mathcal{W}^{(b)}_{\text{e},B}(\hat{\tilde{\bm{V}}}_j,\hat{\bm{\nu}}^{(j)}_{1:b-1}|1)}{\mathcal{W}^{(b)}_{\text{e},B}(\hat{\tilde{\bm{V}}}_j,\hat{\bm{\nu}}^{(j)}_{1:b-1}|1)}\geq 1,\\
0,  & \text{otherwise},
\end{array}
\right.\nonumber
\end{align}
in the order $b$ from $1$ to $B$, and $\mathcal{W}^{(b)}_{\text{e},B}$ is the $b$-th polarized channel from BEC$(q_2)$. Then, for every $j$, the decoder can declare
\begin{align}
\phi\left(\hat{\bm{\nu}}^{(j)}_{1:B}\right)=\left[\begin{array}{ccc}
\hat{v}^{(j)}_{1:|\mathcal{A}|}& | & 0
\end{array}\right]. \nonumber
\end{align}
At this point, the decoder can reconstruct all erased bits as well. More precisely, the erased rows in $\hat{\tilde{\bm{V}}}^T$ can be recovered, and they can be further utilized to decode the information bits in blocks with the degraded channel state in the next phase.

\textbf{Phase III (BSC Decoding for the Degraded Channel State):}

In this phase, the remaining blocks from Phase I are decoded by using BSC SC decoders with respect to degraded channel states. In particular, bits in the frozen set for the degraded channel state (set $\mathcal{F}$ and set $\mathcal{M}_2$) are known due to the previous phases. Hence, the receiver can decode from $y^{(k)}_{1:N}$ using BSC SC decoder with parameter $p_2$, i.e.,
\begin{align}
\hat{\bm{\omega}}^{(k)}_n=\left\{\begin{array}{ll}
1,  & \text{if } n\notin\mathcal{F}\text{, }n\notin\mathcal{M}_2 \text{,}\text{ and }  \frac{\mathcal{W}^{(n)}_{2,N}(y^{(k)}_{1:N},\hat{\bm{\omega}}^{(k)}_{1:n-1}|1)}{\mathcal{W}^{(n)}_{2,N}(y^{(k)}_{1:N},\hat{\bm{\omega}}^{(k)}_{1:n-1}|0)}\geq 1,\\
\hat{\tilde{\bm{V}}}^T_{kn}, &\text{if } n\in\mathcal{M}_2, \\
0,  & \text{otherwise},
\end{array}
\right.\nonumber
\end{align}
in the order $n$ from $1$ to $N$, and $\mathcal{W}^{(n)}_{2,N}$ is the $n$-th polarized channel from BSC$(p_2)$. Then, for every $k$ corresponding to the degraded channel state, the decoder declares
\begin{align}
\pi\left(\hat{\bm{\omega}}^{(k)}_{1:N}\right)=\left[ \begin{array}{ccccccccc}
\hat{s}^{(k)}_{1:|\mathcal{R}|} & | & \hat{\tilde{\bm{U}}}^T_{k} & | & \hat{w}^{(k)}_{1:|\mathcal{I}|} & | & \hat{\tilde{\bm{V}}}^T_{k} & | & 0
\end{array} \right]. \nonumber
\end{align}
Hence, after this decoding procedure, the receiver makes an estimate $\hat{\tilde{\bm{U}}}$ of matrix $\tilde{\bm{U}}$, which further implies all information bits in $u^{(i)}_{1:|\mathcal{A}^c|}$ are decoded. Note that, in addition to information bits, all random bits are decoded reliably at Bob as well. However, in order to guarantee security, we set these bits random (instead of information).

\textbf{Achievable Rate and Reliability:}

The proposed hierarchical scheme allows for recovering all information bits (represented by light blue in Fig.~\ref{fig:encoder}) reliably, as long as the designed rates of polar codes do not exceed the corresponding channel capacities. Hence, the achievable rate is given by
\begin{align}
R   &=\frac{1}{NB}\left(|\mathcal{M}_2|\times|\mathcal{A}| +|\mathcal{M}_1|\times|\mathcal{A}^c|+ B\times|\mathcal{I}|\right)\nonumber\\
    &=[H(p_2)-H(p_1)]\times q_1 +[H(p_2^*)-H(p_1^*)]\times q_2+[H(p_1^*)-H(p_2)]\nonumber\\
    &=[H(p_1^*)-H(p_1)]\times q_1 +[H(p_2^*)-H(p_2)]\times q_2,\label{equ:achievable_rate}
\end{align}
where we have used \eqref{equ:length_M1}, \eqref{equ:length_A}, \eqref{equ:langth_Ac}, \eqref{equ:length_M2}, and \eqref{equ:length_I}. In this scheme, $B$ number of $N$-length polar codes are decoded in Phase I and III in total, and $|\mathcal{M}_2|$ number of $B$-length polar codes are decoded in Phase II. Hence, the decoding error probability is upper bounded by
\begin{align}
\text{Pr}\{\mathsf{M}\neq \hat{\mathsf{M}}|\mathsf{Y}_{1:NB},\mathsf{S}\}\leq B\cdot2^{-N^{\beta}}+|\mathcal{M}_2|\cdot2^{-B^{\beta}},\label{equ:error_bound_main}
\end{align}
where $\beta<1/2$; and, $\mathsf{M}$ is the collection of random variables representing for all information bits (its realizations include $u_{1:|\mathcal{A}^c|}^{(i)}$, $v_{1:|\mathcal{A}|}^{(j)}$, and $w_{1:|\mathcal{I}|}^{(k)}$), and $\hat{\mathsf{M}}$ is the estimate of $\mathsf{M}$ obtained by Bob. Noting that the right hand side of \eqref{equ:error_bound_main} tends to $0$ when implemented with properly large $B$ and $N$, the proposed scheme achieves the upper bound given by \eqref{equ:capacity_system} reliably.

\textbf{Security:}

Assume that, in addition to ${z}^{(k)}_{1:N}$, a genie reveals Eve all information bits $u^{(i)}_{1:|\mathcal{A}^c|}$, $v^{(j)}_{1:|\mathcal{A}|}$, and $w^{(k)}_{1:|\mathcal{I}|}$. Under this condition, we show that all random bits can be reliably decoded at Eve. More precisely, the decoder designed for the eavesdropper also works in phases, similar to the one for the main channel (see Fig.~\ref{fig:decoder_wiretap}).
\begin{itemize}
\item Phase I (BSC Decoding for the Superior Channel State): The decoder still works over the blocks with the superior channel state. However, for the eavesdropper channel with superior channel state, the frozen set consists of bits not only in set $\mathcal{F}$, but also in sets $\mathcal{M}_2$ and $\mathcal{I}$. Since we have assumed the information bits are known at Eve, the classical BSC$(p_1^*)$ SC decoder can be used to decode the random bits.
\item Phase II (BEC Decoding): This phase aims to recover the unknown frozen bits corresponding to the degraded channel state, where a similar scheme as that of the main receiver is adopted. More precisely, by modeling the appropriate symbols corresponding to degraded channel states as erasures, we utilize the BEC$(q_2)$ SC decoder over each row of the matrix after transpose. This scheme successively recovers the erased elements, as the frozen bits for this BEC is the information bits $u^{(i)}_{1:|\mathcal{A}^c|}$ and they are assumed to be known.
\item Phase III (BSC Decoding for the Degraded Channel State): Finally, the decoded result from the previous phase is utilized at the BSC decoding for the degraded state, where the classical BSC$(p_2^*)$ SC decoder is adopted.
\end{itemize}

\begin{figure*}[t]
  \includegraphics[scale=0.78]{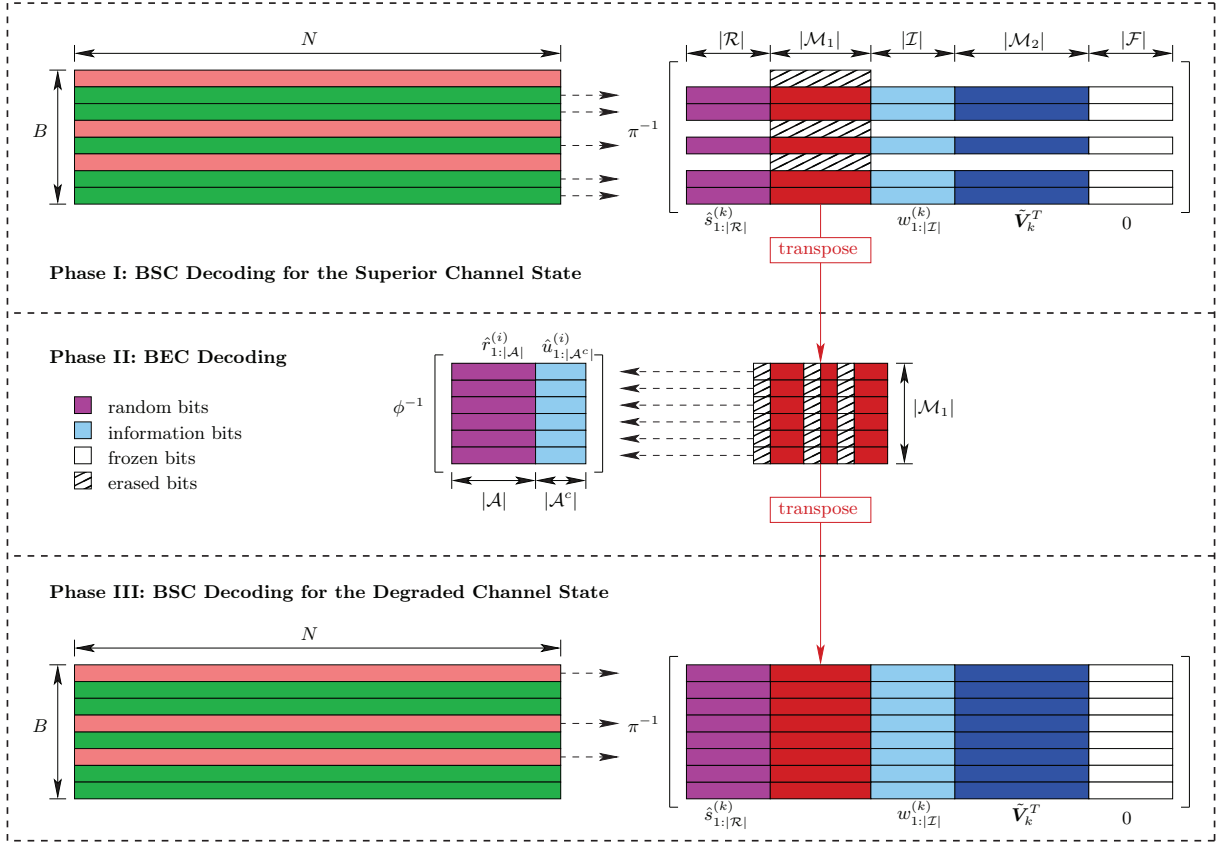}
  \centering
  \caption{Decoder at the eavesdropper given the knowledge of the channel states information and information bits.}\label{fig:decoder_wiretap}
\end{figure*}

By adopting this hierarchical polar decoder, Eve can decode all random bits with high probability, i.e.,
\begin{align}
\text{Pr}\{\mathsf{R}\neq \hat{\mathsf{R}}|\mathsf{Z}_{1:NB},\mathsf{M},\mathsf{S}\}\leq B\cdot2^{-N^{\beta}}+|\mathcal{M}_1|\cdot2^{-B^{\beta}}, \label{equ:error_bound_wiretap}
\end{align}
where $\mathsf{R}$ is the collection of random variables representing for random bits (its realization include $r^{(i)}_{1:|\mathcal{A}|}$ and $s^{(k)}_{1:|\mathcal{R}|}$), and $\hat{\mathsf{R}}$ is the estimate of $\mathsf{R}$. Then, using Fano's inequality, together with \eqref{equ:error_bound_wiretap}, we have
\begin{align}
&H(\mathsf{R}|\mathsf{Z}_{1:NB},\mathsf{M},\mathsf{S})\nonumber\\
&\quad\leq [B\cdot2^{-N^{\beta}}+|\mathcal{M}_1|\cdot2^{-B^{\beta}}]\cdot[|\mathcal{R}|\cdot B+|\mathcal{A}|\cdot|\mathcal{M}_1|]\nonumber\\
&\quad\quad\quad + H(B\cdot2^{-N^{\beta}}+|\mathcal{M}_1|\cdot2^{-B^{\beta}}).\label{equ:Fano_inequality}
\end{align}

Based on this, the following steps provide an upper bound (omitting the subscript of $\mathsf{Z}$):
\begin{align}
I(\mathsf{M};\mathsf{Z}|\mathsf{S})
&=I(\mathsf{M},\mathsf{R};\mathsf{Z}|\mathsf{S})-\left[H(\mathsf{R}|\mathsf{M},\mathsf{S})-H(\mathsf{R}|\mathsf{Z},\mathsf{M},\mathsf{S})\right]\nonumber\\
&\overset{\text{(a)}}{=}I(\mathsf{M},\mathsf{R};\mathsf{Z}|\mathsf{S})-H(\mathsf{R})+H(\mathsf{R}|\mathsf{Z},\mathsf{M},\mathsf{S})\nonumber\\
&\overset{\text{(b)}}{\leq} NB\cdot C_{\textrm{CSI-D}}(\mathcal{W}^*)-H(\mathsf{R})+H(\mathsf{R}|\mathsf{Z},\mathsf{M},\mathsf{S})\nonumber\\
&\overset{\text{(c)}}{=} NB\cdot C_{\textrm{CSI-D}}(\mathcal{W}^*)-|\mathcal{A}|\cdot|\mathcal{M}_1|-B\cdot|\mathcal{R}|+H(\mathsf{R}|\mathsf{Z},\mathsf{M},\mathsf{S})\nonumber\\
&\overset{\text{(d)}}{=} NB\cdot C_{\textrm{CSI-D}}(\mathcal{W}^*)-Bq_1\cdot N[H(p_2^*)-H(p_1^*)]-B\cdot N[1-H(p_2^*)]+H(\mathsf{R}|\mathsf{Z},\mathsf{M},\mathsf{S})\nonumber\\
&= NB\cdot C_{\textrm{CSI-D}}(\mathcal{W}^*)-NB\cdot q_1[1-H(p_1^*)]-NB\cdot q_2[1-H(p_2^*)]+H(\mathsf{R}|\mathsf{Z},\mathsf{M},\mathsf{S})\nonumber\\
&\overset{\text{(e)}}{=}H(\mathsf{R}|\mathsf{Z},\mathsf{M},\mathsf{S}),\nonumber
\end{align}
where
\begin{itemize}
\item[(a)] follows as $\mathsf{R}$ is independent of $\mathsf{M}$ and $\mathsf{S}$;
\item[(b)] is due to the definition of channel $\mathcal{W}^*$'s capacity with CSI-D;
\item[(c)] is due to the assumption that $\mathsf{R}$ is uniform;
\item[(d)] is due to equations \eqref{equ:length_M1}, \eqref{equ:length_A}, and \eqref{equ:length_R};
\item[(e)] is due to the ergodic capacity of the degraded fading eavesdropper channel with channel state information known only at the decoder \cite{Gamal:Network11}, i.e.,
$$C_{\textrm{CSI-D}}(\mathcal{W}^*)=q_1[1-H(p_1^*)]+q_2[1-H(p_2^*)].$$
\end{itemize}

Finally, combining with \eqref{equ:Fano_inequality}, we have
\begin{align}
\frac{1}{NB}I(\mathsf{M};\mathsf{Z}_{1:NB}|\mathsf{S})\to 0,\nonumber
\end{align}
as $N$ and $B$ tends to infinity (with proper choice of the their scaling relationship).
Hence, the proposed scheme achieves the secrecy constraint.

\end{proof}

\subsection{The Scenario of $p_1\leq p_1^*\leq p_2\leq p_2^*$}
\label{sec:simultaneous:case2}

In this section, we extend the aforementioned coding scheme to the scenario of $p_1\leq p_1^*\leq p_2\leq p_2^*$. Combined with Theorem~\ref{thm:LowerBound1} provided earlier in this section, this completes the proof for all possible cases of simultaneous fading and establishes the following result.

\begin{theorem}\label{thm:LowerBound2}
The secrecy capacity for the simultaneous fading model is given by
$$C^s_{\textrm{CSI-D}}  = q_1[H(p_1^*)-H(p_1)]+q_2[H(p_2^*)-H(p_2)].$$
\end{theorem}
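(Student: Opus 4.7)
The plan is to combine the upper bound of Lemma~\ref{thm:UpperBound1}, which holds regardless of the ordering of $p_1^*$ and $p_2$, with a hierarchical polar coding scheme that matches it in the remaining regime $p_1 \leq p_1^* \leq p_2 \leq p_2^*$. The qualitative difficulty to confront at the outset is that the ``pure information'' region $\mathcal{I}$ of size $N[H(p_1^*)-H(p_2)]$ used in Theorem~\ref{thm:LowerBound1} now has nonpositive size: no polarized index is simultaneously reliable for the main channel in its degraded state and unreliable for the eavesdropper in its superior state. Every information bit therefore has to be routed through an outer BEC$(q_2)$ polar code across fading blocks, and a third such outer code must be introduced to absorb the block-level randomness that Case~1 was free to dump into the pure-information region.

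The first step is to partition the $N$ polarized indices of BSC$(p_1)$ (after the permutation $\pi$) into five nested groups reflecting the new ordering: (i) indices reliable for all four channels, of size $N[1-H(p_2^*)]$; (ii) indices reliable for BSC$(p_1)$, BSC$(p_1^*)$, and BSC$(p_2)$ but unreliable for BSC$(p_2^*)$, of size $N[H(p_2^*)-H(p_2)]$; (iii) indices reliable for BSC$(p_1)$ and BSC$(p_1^*)$ but unreliable for BSC$(p_2)$ and BSC$(p_2^*)$, of size $N[H(p_2)-H(p_1^*)]$; (iv) indices reliable only for BSC$(p_1)$, of size $N[H(p_1^*)-H(p_1)]$; and (v) indices unreliable for all four channels, frozen, of size $NH(p_1)$. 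Across fading blocks, Bob's induced channel is noiseless on (i)--(ii) and BEC$(q_2)$ on (iii)--(iv), while Eve's induced channel is noiseless on (i), BEC$(q_2)$ on (ii)--(iii), and fully erased on (iv). Region~(iii) is the genuinely new feature, since both parties experience the same BEC$(q_2)$ there.

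I would then redesign the encoder of Figure~\ref{fig:encoder} using three outer BEC$(q_2)$ polar codes instead of two, one per each of Regions~(ii), (iii), and (iv). In Region~(ii) I mimic Case~1's Region~B, placing information bits on the $Bq_2$ BEC-bad coordinates and fresh random bits on the $Bq_1$ BEC-good coordinates. In Region~(iv) I mimic Case~1's Region~D, placing information bits on the $Bq_1$ BEC-good coordinates and freezing the remaining $Bq_2$. In the new Region~(iii) I place random bits on the $Bq_1$ BEC-good coordinates and freeze the $Bq_2$ BEC-bad coordinates; these carry no message but serve to saturate Eve's equivalent channel. Region~(i) receives fresh random bits in every block, and Region~(v) stays frozen. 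A direct count gives the information rate $R = q_2[H(p_2^*)-H(p_2)] + q_1[H(p_1^*)-H(p_1)]$, matching the upper bound, and total random-bit rate $1 - q_1 H(p_1^*) - q_2 H(p_2^*)$, which is precisely the ergodic capacity $C_{\textrm{CSI-D}}(\mathcal{W}^*)$ of the eavesdropper channel with CSI at the decoder.

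Reliability and security then follow the template of Theorem~\ref{thm:LowerBound1}. Bob's decoder runs three phases: BSC$(p_1)$ SC decoding on superior blocks to recover Regions~(i)--(iv) there; BEC$(q_2)$ SC decoding across blocks on Regions~(iii) and~(iv) to reconstruct their values in degraded blocks; and BSC$(p_2)$ SC decoding on degraded blocks with Regions~(iii)--(iv) now treated as known frozen positions, giving a block-error bound of the familiar form $B\cdot 2^{-N^\beta} + (|\mathcal{M}_1|+|\mathcal{M}_2|+|\mathcal{M}_3|)\cdot 2^{-B^\beta}$ that vanishes under appropriate scaling of $N$ and $B$. For security, a genie reveals all information bits to Eve, who replays Bob's three-phase decoder with BSC$(p_1^*)$ and BSC$(p_2^*)$ in place of BSC$(p_1)$ and BSC$(p_2)$, treating the known information bits as frozen coordinates for the outer BEC codes on Regions~(ii) and~(iv); Fano's inequality together with the capacity identity above then drives $\frac{1}{NB}I(\mathsf{M};\mathsf{Z}_{1:NB}|\mathsf{S})$ to zero exactly as in Theorem~\ref{thm:LowerBound1}. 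I expect the main obstacle to be the bookkeeping for Region~(iii): it carries no message yet its outer BEC code must deliver exactly the random-bit deficit $q_1[H(p_2)-H(p_1^*)]$ needed to raise the total random-bit rate to $C_{\textrm{CSI-D}}(\mathcal{W}^*)$, and one must verify carefully that this new layer is decodable both at Bob (via its frozen BEC-bad coordinates) and at Eve (via her BEC$(q_2)$-style observations from superior blocks).
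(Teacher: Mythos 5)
Your proposal is correct and follows essentially the same route as the paper: the five-way partition (your Regions (i)--(v) are exactly the paper's $\mathcal{R}$, $\mathcal{M}_1$, $\mathcal{M}_3$, $\mathcal{M}_2$, $\mathcal{F}$), the assignment of information, random, and frozen bits to the BEC-good/bad coordinates of the three outer codes, the matching rate and random-bit-rate counts, and the three-phase decoders for Bob and the genie-aided Eve all coincide with the paper's argument. The only cosmetic difference is your labeling and the slightly loose remark about which outer codes Eve actually SC-decodes (she re-encodes Region (iv) from the revealed bits and BEC-decodes Regions (ii) and (iii)), which does not affect correctness.
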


\begin{proof}
Note that although $p_1^*\leq p_2$, the main channel is still stronger than the eavesdropper channel in each fading block (because of the simultaneous fading assumption). To this end, the upper bound reported in Lemma~\ref{thm:UpperBound1} still holds for this scenario. It remains to show the achievability for $p_1\leq p_1^*\leq p_2\leq p_2^*$.

\textbf{Encoding:}

From the previous scenario, the key idea for hierarchical polar coding scheme is setting the size of random bits be $NB\cdot C_{\textrm{CSI-D}}(\mathcal{W}^*)$ and setting the size
of information bits be $NB\cdot C^s_{\textrm{CSI-D}}(\mathcal{W})$. Based on this observation, the encoder for the scenario discussed here is illustrated in Fig.~\ref{fig:encoder_caseii}. Note that we still have five categories for channel indices after polarization. $\mathcal{R}$ and $\mathcal{F}$ remain the same as the previous scenario, but we do not have pure information set in this scenario due to $p_1^*\leq p_2$. Instead, a new set $\mathcal{M}_3$ contains coding results from random bits and frozen bits. More precisely, parameters shown in the figure are defined as follow:
\begin{align}
&|\mathcal{R}|=N[1-H(p_2^*)],\nonumber\\
&|\mathcal{M}_1|=N[H(p_2^*)-H(p_2)],\nonumber\\
&|\mathcal{M}_2|=N[H(p_1^*)-H(p_1)],\nonumber\\
&|\mathcal{M}_3|=N[H(p_2)-H(p_1^*)],\nonumber\\
&|\mathcal{F}|=N\cdot H(p_1),\nonumber\\
&|\mathcal{A}|=B\cdot q_1,\nonumber\\
&|\mathcal{A}^c|=B\cdot q_2.\nonumber
\end{align}

\begin{figure*}[t]
  \includegraphics[scale=0.78]{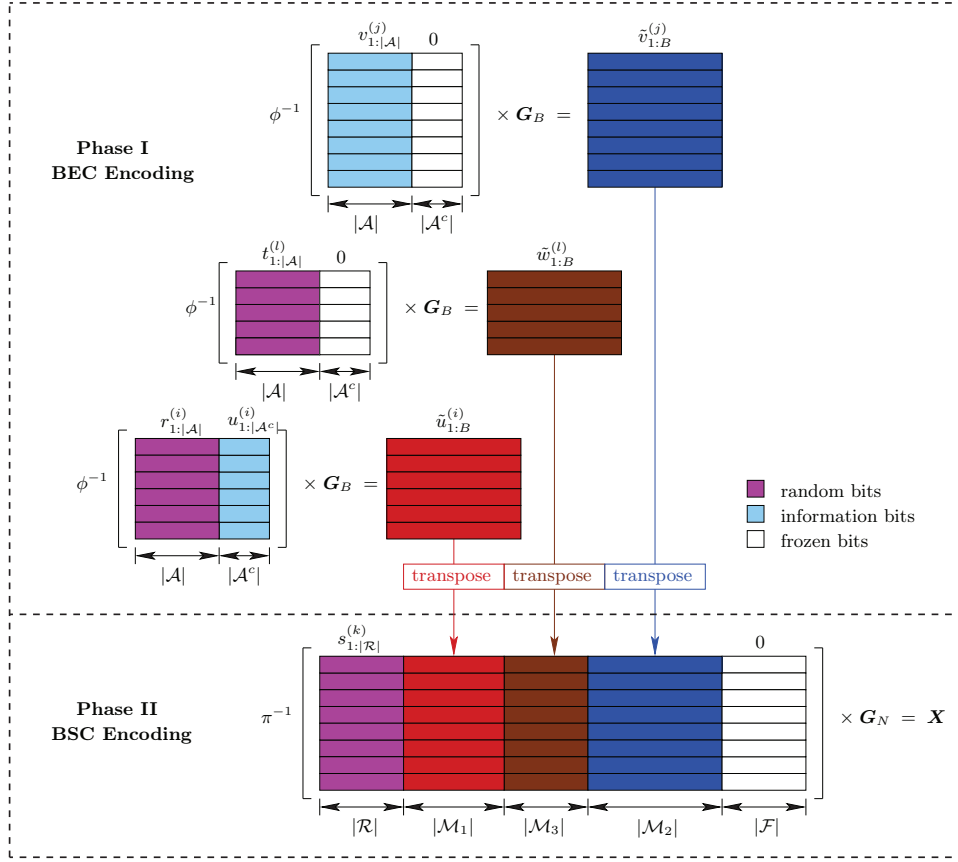}
  \centering
  \caption{Encoder for the scenario of $p_1\leq p_1^*\leq p_2\leq p_2^*$.}\label{fig:encoder_caseii}
\end{figure*}

Then, the encoding procedure works analog to the previous scenario, except that three sets of BEC encoding are performed and the resulting codewords are transposed and embedded into the second phase. In particular, the sketch of hierarchical coding scheme is described as follow:

\textbf{Phase I (BEC Encoding):}

\begin{itemize}
\item Random bits $r_{1:|\mathcal{A}|}^{(i)}$ combined with information bits $u_{1:|\mathcal{A}^c|}^{(i)}$ are encoded to generate $\tilde{u}_{1:B}^{(i)}$, for each $i\in{1,\ldots,|\mathcal{M}_1|}$;
\item Information bits $v_{1:|\mathcal{A}|}^{(j)}$ combined with frozen bits $0$ are encoded to generate $\tilde{v}_{1:B}^{(j)}$, for each $j\in{1,\ldots,|\mathcal{M}_2|}$;
\item Random bits $t_{1:|\mathcal{A}|}^{(l)}$ combined with frozen bits $0$ are encoded to generate $\tilde{w}_{1:B}^{(l)}$, for each $l\in{1,\ldots,|\mathcal{M}_3|}$.
\end{itemize}

\textbf{Phase II (BSC Encoding):}

Coded bits from Phase I are combined with random bits $s^{(k)}_{1:|\mathcal{R}|}$ and frozen bits $0$ are encoded to generate $x^{(k)}_{1:N}$, for each $k\in{1,\ldots,B}$.

\begin{figure*}[t]
  \includegraphics[scale=0.78]{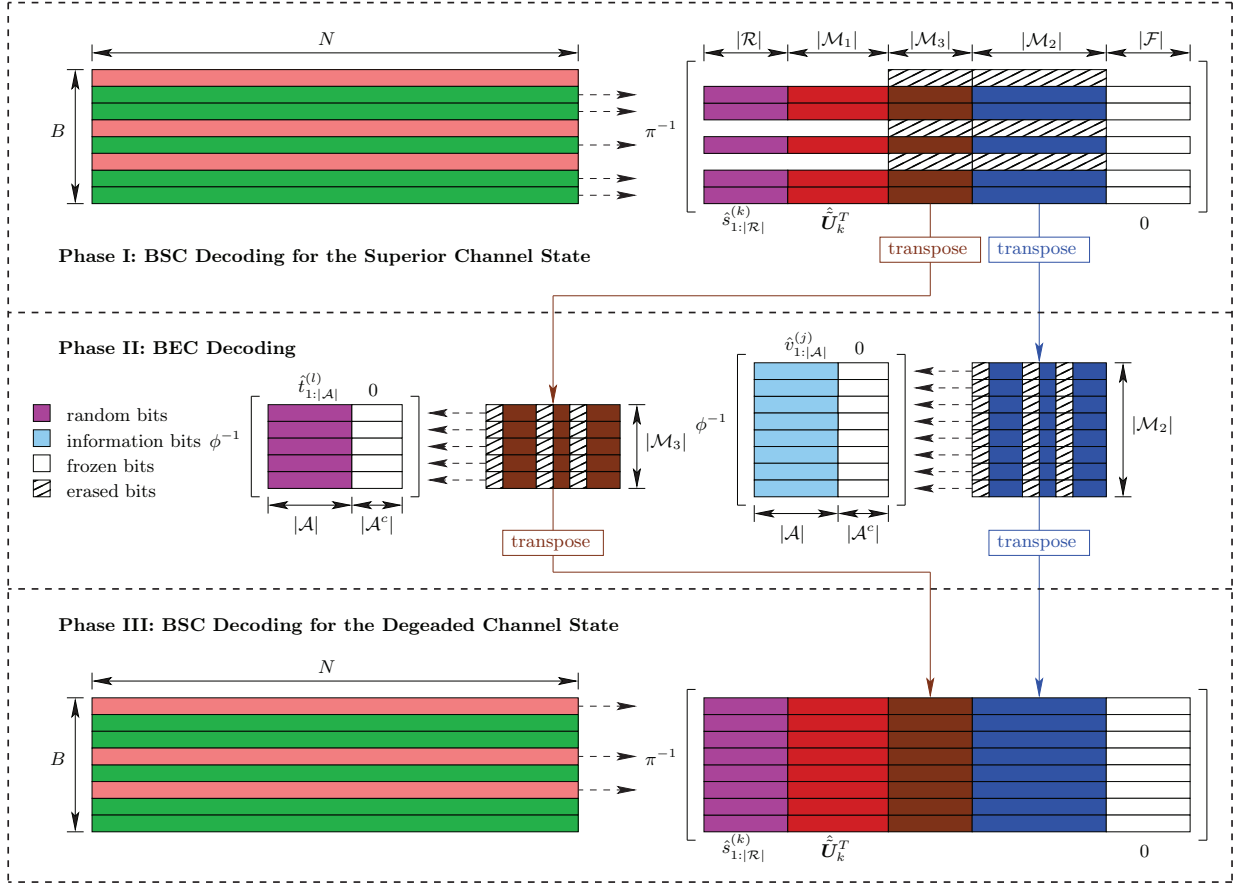}
  \centering
  \caption{Decoder at the main channel for the scenario of $p_1\leq p_1^*\leq p_2\leq p_2^*$.}\label{fig:decoder_main_caseii}
\end{figure*}

\textbf{Decoder for the Main Channel:}

The decoder at the main channel also works in phases. Quite similar to the previous case, the sketch of decoder is as follows (illustrated in Fig.~\ref{fig:decoder_main_caseii}):
\begin{itemize}
\item Phase I (BSC Decoding for the Superior State): Decode the block with respect to the superior state using BSC$(p_1)$ SC decoder by choosing frozen bits as $0$.
\item Phase II (BEC Decoding): Add erasures to the decoded bits in set $\mathcal{M}_3$ and $\mathcal{M}_2$ from previous phase, then decode both the random bits and information bits using BEC$(q_2)$ SC decoder by choosing frozen bits as $0$.
\item Phase III (BSC Decoding for the Degraded State): Recover all bits in set $\mathcal{M}_3$ and $\mathcal{M}_2$ to make them the frozen bits, and decode the block with respect to the degraded state using BSC$(p_2)$ SC decoder.
\end{itemize}

\begin{figure*}[t]
  \includegraphics[scale=0.78]{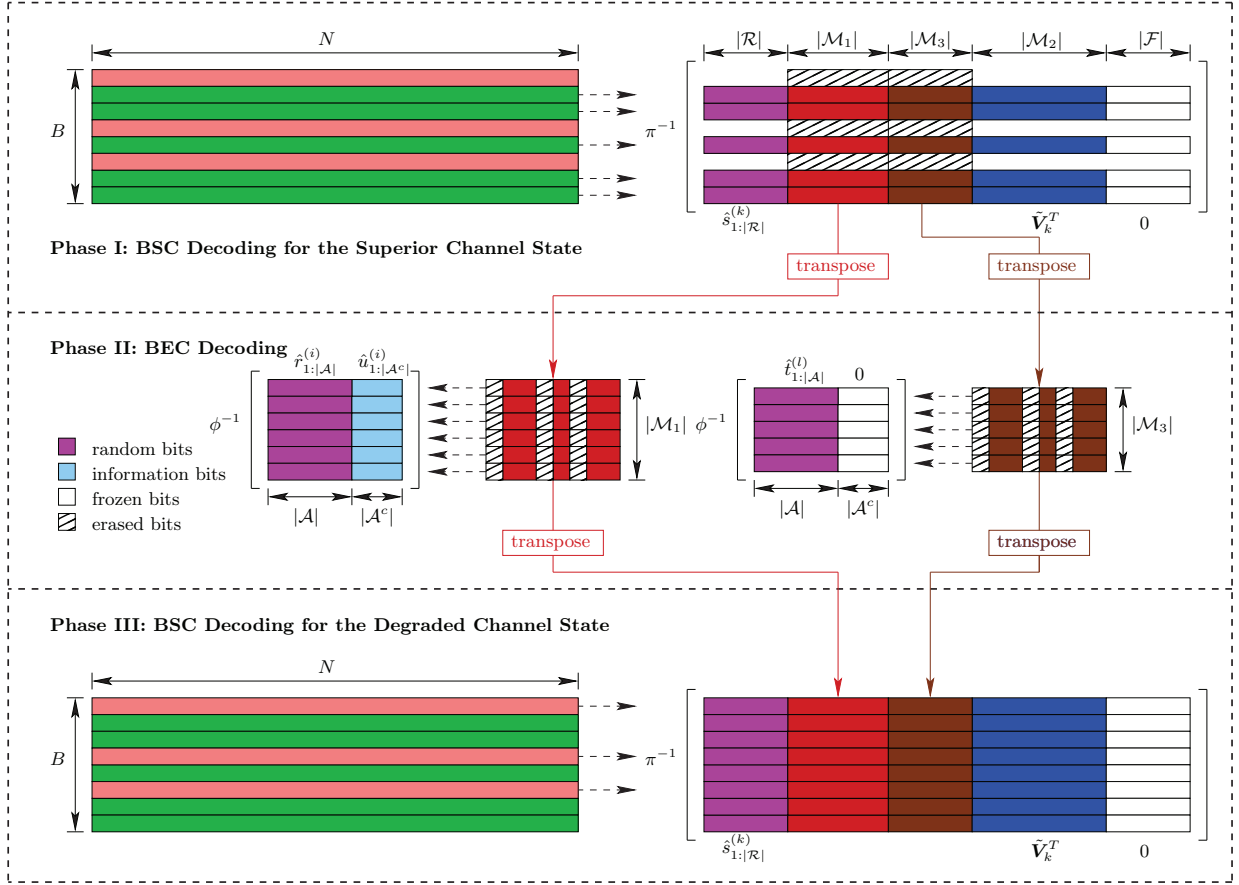}
  \centering
  \caption{Decoder at the eavesdropper for the scenario of $p_1\leq p_1^*\leq p_2\leq p_2^*$.}\label{fig:decoder_wiretap_caseii}
\end{figure*}

\textbf{Achievable Rate and Reliability:}

In this way, all information bits and random bits can be recovered reliably, i.e., \eqref{equ:error_bound_main} still holds in this scenario. Meanwhile, we have
\begin{align}
R   &=\frac{1}{NB}\left(|\mathcal{M}_2|\times|\mathcal{A}| +|\mathcal{M}_1|\times|\mathcal{A}^c|\right)\nonumber\\
    &=[H(p_1^*)-H(p_1)]\times q_1 +[H(p_2^*)-H(p_2)]\times q_2,\nonumber
\end{align}
which means the upper bound \eqref{equ:capacity_system} is achieved.

\textbf{Security:}

Assume the receiver from the eavesdropper channel knows all the information bits, i.e., $u^{(i)}_{1:|\mathcal{A}^c|}$ and $v^{(j)}_{1:|\mathcal{A}|}$ in this scenario. Then, the decoder (employed at the eavesdropper) can obtain all random bits by following the steps below (also see Fig.~\ref{fig:decoder_wiretap_caseii}):

\begin{itemize}
\item Phase I (BSC Decoding for the Superior State): Decode the block with respect to the superior state using BSC$(p_1^*)$ SC decoder by knowing all frozen bits in $\mathcal{F}$ and $\mathcal{M}_2$.
\item Phase II (BEC Decoding): Add erasures to the decoded bits in set $\mathcal{M}_1$ and $\mathcal{M}_3$ from previous phase, then decode both the random bits using BEC$(q_2)$ SC decoder by choosing frozen bits as $u^{(i)}_{1:|\mathcal{A}^c|}$ and $0$ respectively.
\item Phase III (BSC Decoding for the Degraded State): Recover all bits in set $\mathcal{M}_1$ and $\mathcal{M}_3$ to make them the frozen bits, and decode the block with respect to the degraded state using BSC$(p_2^*)$ SC decoder.
\end{itemize}

Hence, all random bits can be decoded reliably, i.e., \eqref{equ:error_bound_wiretap} still holds in this scenario. Then, the same procedures as the previous scenario complete the proof of security.

\end{proof}

\section{Hierarchical Polar Coding for Independent Fading Case}
\label{sec:independent_fading}

In this section, we focus on the case of independent fading for the main channel and the eavesdropper channel. More precisely, the main channel has probability $q_1$ to be in the superior fading state, while the eavesdropper channel has probability $q_1^*$ to be in the superior state (independent of the main channel). The main hurdle here is that the main and eavesdropper channels can be in different fading states (e.g., the main channel can be in degraded state while the eavesdropper channel is in the superior state). Still, as considered in the previous section, we distinguish two scenarios based on the relation between parameters $p^*_1$ and $p_2$.

\subsection{The Scenario of $p_1\leq p_2 \leq p_1^*\leq p_2^*$}

In this scenario, for those fading blocks where the main channel is in degraded state and eavesdropper channel is in superior state, the main channel is still stronger due to $p_2 \leq p_1^*$. To this end, the upper bound for secrecy capacity can be expressed as follows.

\begin{lemma}\label{thm:UpperBound2}
The secrecy capacity for the independent fading scenario with $p_1\leq p_2 \leq p_1^*\leq p_2^*$ is upper bounded by
$$C^s_{\textrm{CSI-D}} \leq q_1^* H(p_1^*)+q_2^* H(p_2^*)-q_1H(p_1)-q_2H(p_2).$$
\end{lemma}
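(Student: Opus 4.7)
The plan is to mimic the proof of Lemma~\ref{thm:UpperBound1}, replacing the two joint fading states of the simultaneous model by the four joint states that arise when the two channels fade independently, and then exploiting the assumed ordering of the crossover probabilities to preserve per-state degradedness.

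First I would upper bound $C^s_{\textrm{CSI-D}}$ by $C^s_{\textrm{CSI-ED}}$, the secrecy capacity when the encoder is additionally given the instantaneous CSI of both channels, exactly as in step $(a)$ of the proof of Lemma~\ref{thm:UpperBound1}. Because the encoder can now adapt its input distribution to the joint channel state $\mathsf{S}=(\mathsf{S}_M,\mathsf{S}_E)$ and the two fading processes are independent, this bound splits as
\begin{equation*}
C^s_{\textrm{CSI-D}}\leq \sum_{m\in\{1,2\}}\sum_{e\in\{1,2\}} q_m q_e^*\, C^s_{m,e},
\end{equation*}
where $C^s_{m,e}$ is the secrecy capacity of the degraded BSC wiretap channel with main crossover $p_m$ and eavesdropper crossover $p_e^*$.

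Second, I would verify that under $p_1\leq p_2\leq p_1^*\leq p_2^*$ each of the four states is indeed a degraded BSC wiretap channel: the chain $p_m\leq p_2\leq p_1^*\leq p_e^*$ gives $p_m\leq p_e^*$ for every $(m,e)$, so Mrs.~Gerber-style degradedness applies and the per-state secrecy capacity, using the standard formula for degraded binary symmetric wiretap channels (as in step $(b)$ of Lemma~\ref{thm:UpperBound1}), is $C^s_{m,e}=H(p_e^*)-H(p_m)$.

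Third, I would substitute and simplify. The sum becomes
\begin{equation*}
\sum_{m,e} q_m q_e^*\bigl[H(p_e^*)-H(p_m)\bigr]=\Bigl(\sum_m q_m\Bigr)\sum_e q_e^* H(p_e^*)-\Bigl(\sum_e q_e^*\Bigr)\sum_m q_m H(p_m),
\end{equation*}
and since $q_1+q_2=q_1^*+q_2^*=1$ this collapses to exactly $q_1^*H(p_1^*)+q_2^*H(p_2^*)-q_1H(p_1)-q_2H(p_2)$, which is the stated bound.

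The only conceptual hurdle is the first step, namely justifying the CSI-enhanced upper bound in the independent-fading setting; but this is standard and follows verbatim from the CSI-ED outer bound of \cite{Gopala:Secrecy08}, so the argument is essentially a bookkeeping exercise once one notes that the ordering assumption $p_2\leq p_1^*$ keeps every joint state degraded.
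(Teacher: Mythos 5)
Your proposal is correct and follows essentially the same route as the paper: bound $C^s_{\textrm{CSI-D}}$ by $C^s_{\textrm{CSI-ED}}$, decompose over the four independent joint fading states with weights $q_m q_e^*$, apply the degraded binary symmetric wiretap secrecy capacity $H(p_e^*)-H(p_m)$ in each state (valid since the ordering $p_2\leq p_1^*$ gives $p_m\leq p_e^*$ throughout), and collapse the sum using $q_1+q_2=q_1^*+q_2^*=1$. The only cosmetic difference is that you make the per-state degradedness check and the final algebraic simplification explicit, which the paper leaves implicit.
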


\begin{proof}
We have the following.
\begin{align}
C^s_{\textrm{CSI-D}} & \leq C^s_{\textrm{CSI-ED}}\nonumber\\
                    & =\max_{p(x|s,s^*)} \left[I(\mathsf{X};\mathsf{Y}|\mathsf{S},\mathsf{S}^*)-I(\mathsf{X};\mathsf{Z}|\mathsf{S},\mathsf{S}^*)\right]\nonumber\\
                    & = q_1q_1^* [H(p_1^*)-H(p_1)]+q_1q_2^* [H(p_2^*)-H(p_1)]\nonumber\\
                    &\quad\quad +q_2q_1^* [H(p_1^*)-H(p_2)]+q_2q_2^* [H(p_2^*)-H(p_2)]\nonumber\\
                    & = q_1^* H(p_1^*)+q_2^* H(p_2^*)-q_1H(p_1)-q_2H(p_2),\label{equ:SCbound_case2_scenario1}
\end{align}
where random variables $\mathsf{S}$ and $\mathsf{S}^*$ are the fading states for the main channel and eavesdropper respectively; $q_2=1-q_1$ and $q_2^*=1-q_1^*$.
\end{proof}

The encoder for this independent fading case is similar to the simultaneous fading case (see Fig.~\ref{fig:encoder_case3}), however, the random bits $r^{(i)}_{1:|\mathcal{A}^*|}$ are now of length $|\mathcal{A}^*|$, where set $\mathcal{A}^*$ is the information set for channel BEC$(q^*_2)$, and corresponding decoder at the eavesdropper is SC BEC$(q^*_2)$ decoder.

\begin{figure*}[t]
  \includegraphics[scale=0.78]{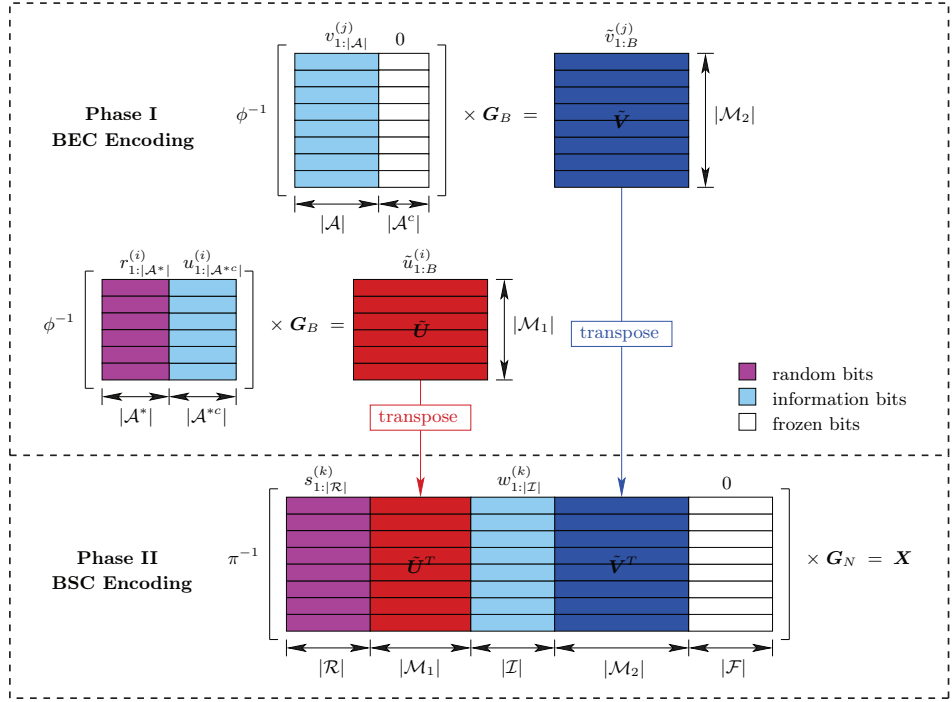}
  \centering
  \caption{Encoder for the scenario of $p_1\leq p_2\leq p_1^*\leq p_2^*$.}\label{fig:encoder_case3}
\end{figure*}

Based on these modifications, all information bits and random bits can still be decoded in this scenario, which implies an achievable rate given by
\begin{align}
R   &=\frac{1}{NB}\left(|\mathcal{M}_2|\times|\mathcal{A}| +|\mathcal{M}_1|\times|\mathcal{A}^{*c}|+|\mathcal{I}|\times B\right)\nonumber\\
    &=[H(p_2)-H(p_1)]\times q_1 +[H(p_2^*)-H(p_1^*)]\times q_2^*+[H(p_1^*)-H(p_2)]\nonumber\\
    &=q_1^* H(p_1^*)+q_2^* H(p_2^*)-q_1H(p_1)-q_2H(p_2).\nonumber
\end{align}
The reliability and security proofs follow from the same steps as the ones detailed for the simultaneous fading case. This achievable rate matches to the upper bound given by Lemma~\ref{thm:UpperBound2}, establishing the secrecy capacity of the system as reported below.

\begin{theorem}
The secrecy capacity for the independent fading scenario with $p_1\leq p_2 \leq p_1^*\leq p_2^*$ is given by
$$C^s_{\textrm{CSI-D}} = q_1^* H(p_1^*)+q_2^* H(p_2^*)-q_1H(p_1)-q_2H(p_2).$$
\end{theorem}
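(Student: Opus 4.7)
The plan is to observe that the theorem is essentially a corollary of Lemma~\ref{thm:UpperBound2} together with the hierarchical polar coding construction already sketched just before the theorem statement. The converse is already in hand, so the entire proof reduces to verifying that the achievable rate matches the outer bound, and that both reliability \eqref{equ:defn_reliability} and security \eqref{equ:defn_security} continue to hold when the fading processes of Bob and Eve are decoupled. I would therefore structure the proof in three short blocks: (i) invoke Lemma~\ref{thm:UpperBound2} for the converse; (ii) specify the encoder/decoder modifications relative to the scheme in Section~\ref{sec:simultaneous:case1}; (iii) recompute rate, error probability, and equivocation.

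For the achievability, I would emphasize the single structural change to the scheme of Theorem~\ref{thm:LowerBound1}: because Eve's fading is now governed by $q_1^{*}$ rather than $q_1$, the outer (BEC) polar code used to inject randomness must be designed for BEC$(q_2^{*})$ instead of BEC$(q_2)$. Concretely, replace the information set $\mathcal{A}$ used for the $r^{(i)}$ bits by the information set $\mathcal{A}^{*}$ of BEC$(q_2^{*})$, keeping $|\mathcal{A}|=Bq_1$ for the $v^{(j)}$ code (since the $v^{(j)}$ code lives on the main channel's erasure pattern). This changes only the bookkeeping of block lengths; the three-phase decoder at Bob is unchanged structurally, since the polarization hierarchy for the main channel still relies on BEC$(q_2)$ over the $\mathcal{M}_2$ indices, and Bob's decoder is oblivious to $q_1^{*}$.

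The rate calculation is then just substitution: by the same counting argument as in \eqref{equ:achievable_rate} we obtain
\begin{align}
R &= \frac{1}{NB}\bigl(|\mathcal{M}_2|\,|\mathcal{A}| + |\mathcal{M}_1|\,|\mathcal{A}^{*c}| + B\,|\mathcal{I}|\bigr) \nonumber\\
  &= [H(p_2)-H(p_1)]q_1 + [H(p_2^{*})-H(p_1^{*})]q_2^{*} + [H(p_1^{*})-H(p_2)], \nonumber
\end{align}
which after rearrangement equals $q_1^{*}H(p_1^{*})+q_2^{*}H(p_2^{*}) - q_1H(p_1) - q_2H(p_2)$, matching the upper bound. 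Reliability at Bob is verbatim that of Theorem~\ref{thm:LowerBound1}: the three-phase SC decoder succeeds with error probability bounded as in \eqref{equ:error_bound_main}, because Bob's decoding problem is unaffected by Eve's fading statistics.

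The part I expect to require the most care is the equivocation analysis, and this is where using $q_2^{*}$ (rather than $q_2$) for the $r^{(i)}$ code is essential. Following the template of Theorem~\ref{thm:LowerBound1}, I would grant Eve a genie with all the information bits and argue that her three-phase decoder (BSC$(p_1^{*})$ on superior-state blocks, BEC$(q_2^{*})$ across blocks, then BSC$(p_2^{*})$ on degraded-state blocks) recovers all random bits with error vanishing as in \eqref{equ:error_bound_wiretap}. The BEC phase now uses BEC$(q_2^{*})$ precisely because the positions unresolvable after Eve's BSC phase correspond to her degraded-state blocks, whose fraction is $q_2^{*}$; this is exactly why $|\mathcal{A}^{*c}|$ was chosen. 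Plugging this Fano bound into the chain
\[
I(\mathsf{M};\mathsf{Z}|\mathsf{S},\mathsf{S}^{*}) = I(\mathsf{M},\mathsf{R};\mathsf{Z}|\mathsf{S},\mathsf{S}^{*}) - H(\mathsf{R}) + H(\mathsf{R}|\mathsf{Z},\mathsf{M},\mathsf{S},\mathsf{S}^{*}),
\]
and using the ergodic capacity of Eve's channel $C_{\textrm{CSI-D}}(\mathcal{W}^{*}) = q_1^{*}[1-H(p_1^{*})] + q_2^{*}[1-H(p_2^{*})]$ together with $H(\mathsf{R}) = B|\mathcal{R}| + |\mathcal{A}^{*c}|\,|\mathcal{M}_1| = NB\,C_{\textrm{CSI-D}}(\mathcal{W}^{*})$, the deterministic leakage term cancels and weak secrecy \eqref{equ:defn_security} follows. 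The only subtle check is that $\mathsf{R}$ is indeed independent of $(\mathsf{M},\mathsf{S},\mathsf{S}^{*})$, which holds by construction.
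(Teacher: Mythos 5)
Your proposal follows essentially the same route as the paper: the converse is Lemma~\ref{thm:UpperBound2}, and the achievability is obtained by redesigning only the outer BEC code that carries the $r^{(i)}$ randomness for BEC$(q_2^{*})$ (so that Eve's unresolved rows, occurring at fraction $q_2^{*}$ of the blocks, form the erasure pattern she must invert), with the rate computation, reliability bound, and equivocation chain carried over verbatim from the simultaneous fading case. The only flaw is a notational slip in the equivocation step: the randomness count should be $H(\mathsf{R})=B|\mathcal{R}|+|\mathcal{A}^{*}|\,|\mathcal{M}_1|$ since the random bits occupy the \emph{good} positions $\mathcal{A}^{*}$ of BEC$(q_2^{*})$ (the information bits sit in $\mathcal{A}^{*c}$), and with $|\mathcal{A}^{*}|=Bq_1^{*}$ this indeed equals $NB\,C_{\textrm{CSI-D}}(\mathcal{W}^{*})$ and the cancellation you describe goes through.
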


\subsection{The Scenario of $p_1\leq p_1^*\leq p_2\leq p_2^*$}

\textbf{Upper bound on secrecy capacity:}

In this scenario, for those fading blocks where the main channel is in degraded state and eavesdropper channel is in superior state, the eavesdropper channel is stronger. Therefore, the upper bound for secrecy capacity can be expressed as in the following.

\begin{lemma}\label{thm:UpperBound3}
The secrecy capacity for the independent fading scenario with $p_1\leq p_1^*\leq p_2\leq p_2^*$ is upper bounded by
$$C^s_{\textrm{CSI-D}} \leq q_1q_1^* H(p_1^*)+q_2^* H(p_2^*)-q_1H(p_1)-q_2q_2^*H(p_2).$$
\end{lemma}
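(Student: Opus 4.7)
The plan is to mirror the structure of Lemma~\ref{thm:UpperBound2}, using the genie argument $C^s_{\textrm{CSI-D}} \leq C^s_{\textrm{CSI-ED}}$ from \cite{Gopala:Secrecy08} and then evaluating the CSI-ED secrecy capacity as an average over the four joint fading states $(s,s^*) \in \{1,2\}^2$. Since the input distribution is allowed to depend on the joint state, the maximization splits and we obtain
\begin{align}
C^s_{\textrm{CSI-ED}} = \sum_{s,s^*} q_s q_{s^*}^* \, C_s(\mathrm{BSC}(p_s), \mathrm{BSC}(p_{s^*}^*)), \nonumber
\end{align}
where $C_s(\cdot,\cdot)$ denotes the secrecy capacity of the corresponding static binary symmetric wiretap channel.

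The new ingredient, and the main conceptual obstacle relative to Lemma~\ref{thm:UpperBound2}, is that the per-state wiretap channel is no longer always degraded in the same direction. With $p_1 \leq p_1^* \leq p_2 \leq p_2^*$, three of the four state combinations have the main channel at least as strong as the eavesdropper's, but for $(s,s^*)=(2,1)$ we have $p_1^* \leq p_2$, so the main channel is a degraded version of the eavesdropper channel and the per-state secrecy capacity is $0$. I would justify the two cases cleanly: for main-stronger states, invoke the degraded BSC wiretap result \cite{Gamal:Network11} to get $H(p_{s^*}^*) - H(p_s)$; for the eve-stronger state, note that $I(X;Y|s,s^*) \leq I(X;Z|s,s^*)$ for every input distribution, so the supremum of the difference over $p(x|s,s^*)$ is $0$ (attained by a constant input), and thus this state contributes nothing to $C^s_{\textrm{CSI-ED}}$.

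With these per-state values in hand, the remaining work is a short algebraic simplification. The bound becomes
\begin{align}
C^s_{\textrm{CSI-D}} &\leq q_1 q_1^*[H(p_1^*) - H(p_1)] + q_1 q_2^*[H(p_2^*) - H(p_1)] \nonumber\\
&\quad + q_2 q_1^*\cdot 0 + q_2 q_2^*[H(p_2^*) - H(p_2)], \nonumber
\end{align}
and collecting the coefficients of each entropy term, using $q_1 + q_2 = 1$ and $q_1^* + q_2^* = 1$, yields the advertised expression $q_1 q_1^* H(p_1^*) + q_2^* H(p_2^*) - q_1 H(p_1) - q_2 q_2^* H(p_2)$.

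I expect the cleanest presentation to highlight the $(2,1)$ case as the sole structural departure from Lemma~\ref{thm:UpperBound2}: once one accepts that this state's contribution must be clipped at $0$, the rest of the derivation is a mechanical convex combination. Care is needed only in arguing that taking the positive part within the maximization is legitimate, which follows because even with CSI at the encoder, the encoder cannot do better than idling on blocks where the eavesdropper channel is less noisy than the main channel, so the corresponding per-state contribution is exactly $0$ rather than a negative number that would otherwise tighten the bound.
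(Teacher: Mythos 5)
Your proposal is correct and follows essentially the same route as the paper: bound $C^s_{\textrm{CSI-D}}$ by $C^s_{\textrm{CSI-ED}}$, decompose the maximization over the four joint fading states, evaluate each state via the degraded BSC wiretap capacity (with the $(s,s^*)=(2,1)$ state contributing $0$ since $p_1^*\leq p_2$ makes the main channel the degraded one there), and simplify. The paper simply asserts the zero contribution for that state, whereas you justify it explicitly; the algebra and final expression match.
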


\begin{figure*}[t]
  \includegraphics[scale=0.7]{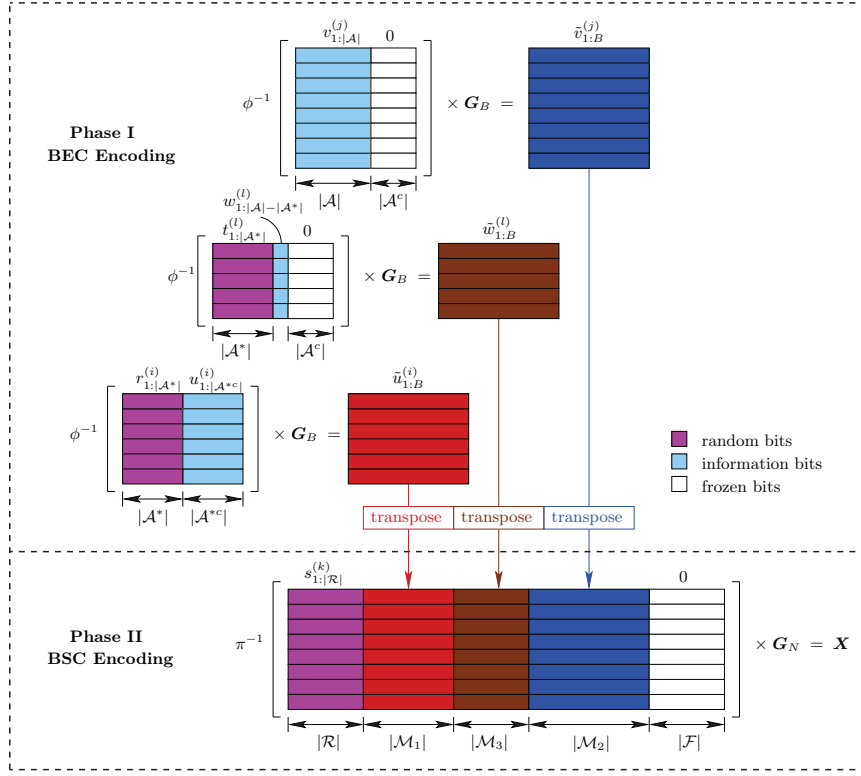}
  \centering
  \caption{Encoder for independent fading case with $q_1\geq q_1^*$}\label{fig:encoder_case4b}
\end{figure*}

\begin{proof}
We have
\begin{align}
C^s_{\textrm{CSI-D}} & \leq C^s_{\textrm{CSI-ED}}\nonumber\\
                    & =\max_{p(x|s,s^*)} \left[I(\mathsf{X};\mathsf{Y}|\mathsf{S},\mathsf{S}^*)-I(\mathsf{X};\mathsf{Z}|\mathsf{S},\mathsf{S}^*)\right]\nonumber\\
                    & = q_1q_1^* [H(p_1^*)-H(p_1)]+q_1q_2^* [H(p_2^*)-H(p_1)]\nonumber\\
                    &\quad\quad +q_2q_1^* 0+q_2q_2^* [H(p_2^*)-H(p_2)]\nonumber\\
                    & = q_1q_1^* H(p_1^*)+q_2^* H(p_2^*)-q_1H(p_1)-q_2q_2^*H(p_2).\label{equ:SCbound_case2_scenario2}
\end{align}
\end{proof}

\textbf{Lower bound on secrecy capacity (for $q_1\geq q_1^*$):}

We focus on the case where superior fading states of the main channel occur more frequently than the superior states of the eavesdropper channel, i.e., $q_1\geq q_1^*$. Under this assumption, we establish an achievable rate as reported below.

\begin{theorem}\label{thm:LowerBound3}
The secrecy capacity for the independent fading scenario with $p_1\leq p_1^*\leq p_2\leq p_2^*$ and $q_1\geq q_1^*$ is lower bounded by
$$C^s_{\textrm{CSI-D}} \geq [H(p_1^*)-H(p_1)]\times q_1 +[H(p_2^*)-H(p_2)]\times q_2^*+[H(p_2)-H(p_1^*)]\times(q_1- q_1^*).$$
\end{theorem}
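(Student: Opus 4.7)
The plan is to adapt the hierarchical polar coding scheme developed in Theorem~\ref{thm:LowerBound2} (simultaneous fading, scenario $p_1 \leq p_1^* \leq p_2 \leq p_2^*$) to the independent-fading setting. Within a fading block, polarizing BSC$(p_1)$ and BSC$(p_2)$ still produces the five index sets $\mathcal{R}, \mathcal{M}_1, \mathcal{M}_3, \mathcal{M}_2, \mathcal{F}$ with sizes $N[1-H(p_2^*)]$, $N[H(p_2^*)-H(p_2)]$, $N[H(p_2)-H(p_1^*)]$, $N[H(p_1^*)-H(p_1)]$, and $NH(p_1)$, since these depend only on the BSC parameters. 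The novelty lies in the across-block BEC layer: Bob now sees an effective BEC$(q_2)$ across the $B$ blocks (a block is erased from his viewpoint whenever the main channel is degraded), while Eve independently experiences BEC$(q_2^*)$. The hypothesis $q_1 \geq q_1^*$ (equivalently $q_2 \leq q_2^*$) renders Eve's BEC degraded with respect to Bob's, so the good polarized indices of BEC$(q_2^*)$ are contained in those of BEC$(q_2)$; this nested structure is exactly what hierarchical polar coding exploits.

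The encoder tailors the across-block BEC code for each message-bearing family according to the effective Bob/Eve BECs on that family. The $\mathcal{M}_2$ indices are good only for Main-sup, so Eve's BEC there is trivially BEC$(1)$; I place $Bq_1$ information bits on the good set of BEC$(q_2)$ and freeze the rest. The $\mathcal{M}_1$ indices are bad only for Eve-deg, so Bob's BEC is noiseless while Eve sees BEC$(q_2^*)$; I place $Bq_2^*$ information bits on the bad set of BEC$(q_2^*)$ and $Bq_1^*$ uniform random bits on its good set. The $\mathcal{M}_3$ indices are bad for both degraded states, yielding Bob BEC$(q_2)$ and Eve BEC$(q_2^*)$; exploiting the degradedness, I place $B(q_1 - q_1^*)$ information bits on positions good for Bob but bad for Eve, $Bq_1^*$ random bits on Eve's good set, and freeze on Bob's bad set. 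The $\mathcal{R}$ positions carry independent random bits and $\mathcal{F}$ is frozen. Summing the three information contributions gives exactly the target rate of Theorem~\ref{thm:LowerBound3}.

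For reliability, Bob runs the three-phase decoder of the simultaneous Case~2 proof: BSC$(p_1)$ SC decoding on Main-sup blocks, then across-block BEC$(q_2)$ SC decoding on each row of $\mathcal{M}_2$ and $\mathcal{M}_3$ (treating Main-deg blocks as erasures), then BSC$(p_2)$ SC decoding on Main-deg blocks using the recovered $\mathcal{M}_2, \mathcal{M}_3$ bits as frozen. For $\mathcal{M}_1$, Bob obtains every one of its $B$ values noise-free (the set is good in both main states) and simply inverts the polar transform to recover the information and random bits. A union bound over the $B$ BSC decodings of length $N$ and the $|\mathcal{M}_2|+|\mathcal{M}_3|$ BEC decodings of length $B$ yields vanishing error probability with $\beta < 1/2$.

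For security, I would give Eve all information bits via a genie and mirror the decoder: BSC$(p_1^*)$ SC decoding on Eve-sup blocks, BEC$(q_2^*)$ SC decoding across blocks on each row of $\mathcal{M}_1$ and $\mathcal{M}_3$ (using the known information bits as frozen values on her BEC's bad set), then BSC$(p_2^*)$ SC decoding on Eve-deg blocks. Following the Fano-plus-chain-rule argument of Theorem~\ref{thm:LowerBound1}, weak secrecy reduces to the bookkeeping identity $H(\mathsf{R}) = NB \cdot C_{\textrm{CSI-D}}(\mathcal{W}^*)$, which I anticipate to be the main obstacle. Counting the random bits directly gives
\begin{align}
\frac{H(\mathsf{R})}{NB} &= [1-H(p_2^*)] + q_1^*[H(p_2^*)-H(p_2)] + q_1^*[H(p_2)-H(p_1^*)] \nonumber \\
&= q_1^*[1-H(p_1^*)] + q_2^*[1-H(p_2^*)], \nonumber
\end{align}
which matches the CSI-D ergodic capacity of Eve's independent-fading eavesdropper channel. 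With this identity the mutual-information bound collapses to $I(\mathsf{M};\mathsf{Z}|\mathsf{S}) \leq H(\mathsf{R}|\mathsf{Z},\mathsf{M},\mathsf{S})$, and Fano's inequality applied to Eve's vanishing genie-aided error yields $\frac{1}{NB} I(\mathsf{M};\mathsf{Z}_{1:NB}|\mathsf{S}) \to 0$, completing the achievability argument.
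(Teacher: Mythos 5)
Your proposal is correct and follows essentially the same route as the paper: exploiting $q_2\leq q_2^*$ to get $\mathcal{A}^*\subseteq\mathcal{A}$, placing the extra $(q_1-q_1^*)B$ information bits per row of $\mathcal{M}_3$ on $\mathcal{A}\setminus\mathcal{A}^*$, keeping randomness of total size $NB\left[q_1^*(1-H(p_1^*))+q_2^*(1-H(p_2^*))\right]$ to exhaust Eve's CSI-D capacity, and reusing the three-phase decoders and the Fano/chain-rule secrecy argument from the simultaneous-fading case. In fact your write-up spells out the per-set index bookkeeping and the randomness-equals-eavesdropper-capacity identity more explicitly than the paper, which largely delegates these details to its encoder figure.
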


\begin{proof}


In the scenario of $q_1\geq q_1^*$, the probability of degraded fading state (the underlying erasure probability in the proposed coding scheme) for the main channel is smaller than that of the eavesdropper channel. Therefore, the polarized indices for these receivers satisfy $\mathcal{A}^*\subseteq\mathcal{A}$. This enables us to construct another set of information bits of size $(|\mathcal{A}|-|\mathcal{A}^*|)\times|\mathcal{M}_3|$ in set $\mathcal{M}_3$ (as shown in Fig.~\ref{fig:encoder_case4b}). Then, the main channel decoder can decode all information bits, and eavesdropper, given the information bits, can also decode random bits. We note that the amount of randomness is equal to the capacity of fading eavesdropper channel, so the analysis detailed in Section~\ref{sec:simultaneous:case1} follows here, establishing the proof of security.
Hence, the achievable secrecy rate is
\begin{align}
R   &=\frac{1}{NB}\left[|\mathcal{M}_2|\times|\mathcal{A}| +|\mathcal{M}_1|\times|\mathcal{A}^{*c}|+|\mathcal{M}_3|\times(|\mathcal{A}|-|\mathcal{A}^*|)\right]\nonumber\\
    &=[H(p_1^*)-H(p_1)]\times q_1 +[H(p_2^*)-H(p_2)]\times q_2^*+[H(p_2)-H(p_1^*)]\times(q_1- q_1^*).\nonumber
\end{align}

\begin{remark}
We note that the technique described above may not be utilized for the case of $q_1<q_1^*$. In particular, the symbols denoted by $\mathcal{M}_3$ in Fig.~\ref{fig:encoder_case4b} has to include full randomness dictated by $\mathcal{A}^*$ to satisfy the secrecy constraint. However, with such a code, the main receiver may not decode these random bits as $\mathcal{A}\subset\mathcal{A}^*.$
\end{remark}



\end{proof}

\textbf{On the gap between lower and upper bounds (for $q_1\geq q_1^*$):}

We remark that the rate gap between the upper bound in Lemma~\ref{thm:UpperBound3} and the achievable rate reported in Theorem~\ref{thm:LowerBound3} is given by
\begin{align}
\Delta R
		&= \left\{q_1q_1^* H(p_1^*)+q_2^* H(p_2^*)-q_1H(p_1)-q_2q_2^*H(p_2) \right\}\nonumber\\
		&\quad -\left\{ [H(p_1^*)-H(p_1)]\times q_1 +[H(p_2^*)-H(p_2)]\times q_2^*+[H(p_2)-H(p_1^*)]\times(q_1- q_1^*)\right\}\nonumber\\
        &=q_1^*q_2[H(p_2)-H(p_1^*)]
\end{align}

Noting that we have $q_1\geq q_1^*$ in this scenario, we can further upper bound the gap as follows.
\begin{align}
\Delta R
& =  q_1^*q_2[H(p_2)-H(p_1^*)]\\
& \stackrel{(a)}{\leq} q_1^* (1-q_1^*) [H(p_2)-H(p_1^*)] \\
& \stackrel{(b)}{\leq} 0.25[H(p_2)-H(p_1^*)]\\
& \stackrel{(c)}{\triangleq} \overline{\Delta R}, \label{eq:gapupperbound}
\end{align}
where (a) is due to $q_1\geq q_1^*$, implying $q_2=1-q_1\leq 1-q_1^*$, (b) follows as $\max\limits_{x} x(1-x)=0.25$, and in (c) we define this upper bound on the gap as $\overline{\Delta R}$.

Fig.~\ref{fig:bounds and gap} illustrates the relationships between the upper bound and achievable rate proposed in Lemma~\ref{thm:UpperBound3} and Theorem~\ref{thm:LowerBound3}. In Fig.~\ref{fig:bounds}, we report the \emph{gap coefficient} $\frac{\Delta R}{[H(p_2)-H(p_1^*)]}$ as a function of the fading parameters $q_1$ and $q_1^*$, the probabilities of superior fading states for main and eavesdropper channels, respectively. Note that we have $1\geq q_1\geq q_1^*\geq 0$ in this case. In Fig.~\ref{fig:gap}, we report the upper bound on the gap given by $\overline{\Delta R}$ as a function of channel parameters $p_2$ and $p_1^*$. Note that, we have $0.5\geq p_2\geq p_1^*\geq 0$ in this case. The upper bound on the gap is at most $0.25$ (bits), as can be seen from the expression of $\overline{\Delta R}$ in \eqref{eq:gapupperbound}, increases with $p_2$, and decreases with $p_1^*$. Here, the gap diminishes as $p_1^*\leq p_2$ gets closer to $p_2$. The capacity is established earlier for $p_1^*\geq p_2$ case, i.e., when main channel fading realization is always stronger than that of eavesdropper, so we set the corresponding points for gap to zero in the plot. $\overline{\Delta R}$ is equal to $q_2q_1^*$ times the difference between the channel capacity for superior eavesdropper channel (i.e., $1-H(p_1^*)$) and that for degraded main channel (i.e., $1-H(p_2)$). Thus, this upper bound on gap to capacity for the proposed scheme linearly scales with the difference of these channel capacities.

\begin{figure*}
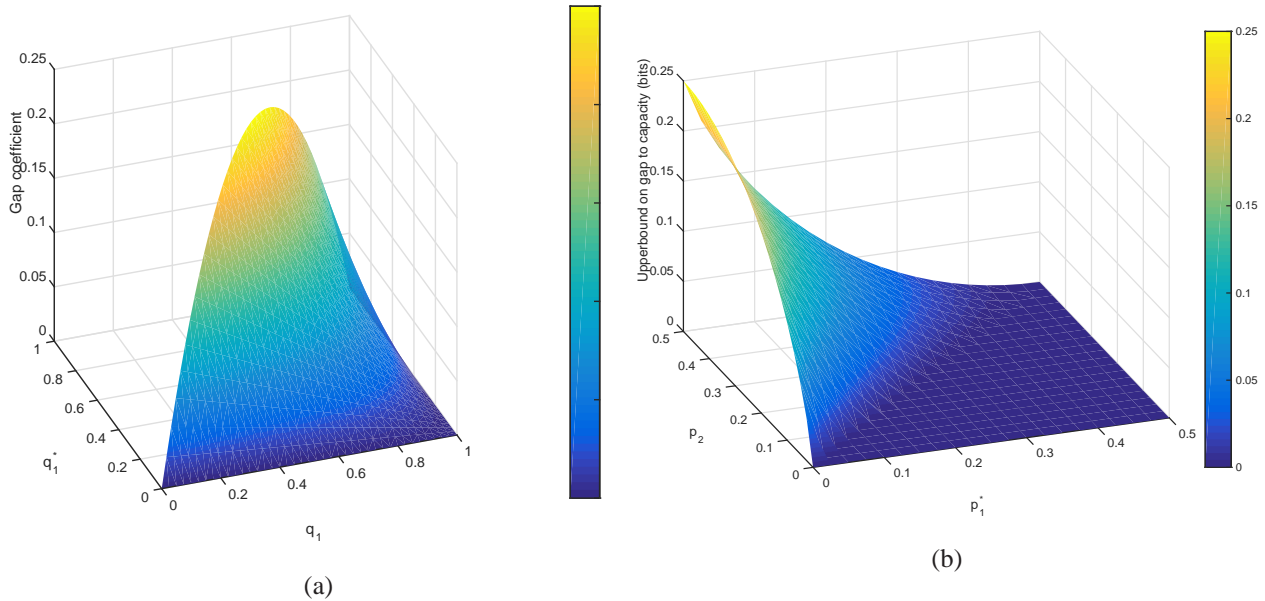

\begin{subfigure}{.5\textwidth}
  \centering
  \includegraphics[width=1\linewidth]{Rate_Gap_Coefficient.eps}
  \caption{}
  \label{fig:bounds}
\end{subfigure}
\begin{subfigure}{.5\textwidth}
  \centering
  \includegraphics[width=1\linewidth]{Rate_Gap_Upperbound.eps}
  \caption{}
  \label{fig:gap}
\end{subfigure}
\caption{Illustration of the gap to capacity. (a) Gap coefficient $\frac{\Delta R}{[H(p_2)-H(p_1^*)]}$ as a function of the fading parameters $q_1$ and $q_1^*$ for $p_1\leq p_1^*\leq p_2\leq p_2^*$. (b) Upper bound $\overline{\Delta R}$ on the gap between achievable rate and capacity. (No gap exists and the capacity is achieved for $p_2\leq p_1^*$.)}
\label{fig:bounds and gap}
\end{figure*}

We remark that the proposed coding scheme pays the penalty of securing information by exhausting the capacity seen by the eavesdropper. That is, the amount of randomness we utilize is equal to the fading channel capacity seen by the eavesdropper (according to the marginal distribution of the channel $p(y^*|x)$). However, this is not always the case for the upper bound. For instance, when the eavesdropper channel realization is superior and the main channel is degraded, as the encoder is assumed to know CSI, no additional penalty is paid to secure the information for these fading blocks as security can not be achieved. So, while the achievability assumes no knowledge of instantaneous CSI, the encoder knows and adapts the code according to eavesdropper CSI for the converse argument. Therefore, the gap we reported here (probably mostly) reflects the loss due to this CSI knowledge difference.


\section{Conclusion}
\label{sec:conclusion}

In this paper, a hierarchical polar coding scheme is proposed for binary symmetric wiretap channels with block fading. By exploiting an erasure decoding approach at the receiver, this scheme utilizes the polarization of degraded binary symmetric channels to survive from the impact of fading. Meanwhile, to combat with eavesdropping, random bits are injected into the encoded symbols. We showed that this proposed coding scheme achieves the secrecy capacity when both main and eavesdropper channels experience block fading simultaneously. For the scenario of independent block fading model, we showed that the capacity is achieved when the main channel has always a superior fading realization as compared to that of the eavesdropper. For the remaining case of when eavesdropper's state can be stronger than the main receiver, a gap to secrecy capacity is derived using an upper bound derived from a model where the encoder knows the instantaneous CSI and a lower bound for the special case of when superior fading state frequency of the main channel is higher than that of the eavesdropper.

Remarkably, for the cases where the proposed coding scheme achieves the secrecy capacity, there is no loss due to statistical CSI knowledge (as compared to instantaneous CSI knowledge). For the remaining cases, namely when the eavesdropper channel can see stronger channel state than that of the main channel, this conclusion remains an open problem, and not only the inner bound, but also the upper bound we proposed here could be loose. In addition, the case where the superior fading channel frequency of the eavesdropper channel is greater than that of the main channel has resisted our efforts thus far. The hierarchical coding scheme proposed here does not extend to this case (as the required inclusion of polarized channels is not satisfied for this scenario), and this case remains as an open problem. We finally note that, although we consider binary symmetric channels in this paper, the hierarchical coding scheme can be applied as a general method to other scenarios (such as fading blocks with more states) for simultaneously resolving fading and security problems. In particular, noting that AWGN channels with BPSK modulation and demodulation resembles a BSC, the proposed scheme covers a fairly large set of practically relevant channel models.


\bibliographystyle{IEEEtran}
\bibliography{References}

\end{document}